\renewcommand\a{{\bf a}}
\renewcommand\b{{\bf b}}
\renewcommand\d{{\bf d}}
\newcommand\eps{\epsilon}
\newcommand\eref[1]{$(\ref{#1})$}
\renewcommand\l{{\bf l}}
\newcommand\R{{\bf R}}
\renewcommand\r{{\bf r}}
\newcommand\s{{\bf s}}
\newcommand\rank{\mathop{\rm rank}}
\renewcommand\top{\mathop{\rm top}}
\renewcommand\u{{\bf u}}
\renewcommand\v{{\bf v}}
\newcommand\x{{\bf x}}
\newcommand\bz{{\bf 0}}
\newtheorem{theorem}{Theorem}
\newtheorem{lemma}[theorem]{Lemma}
\newcommand\fopt{f^{\rm opt}}
\newcounter{algstmt}
\newcommand\algnumber[1]{\refstepcounter{algstmt}$\langle\thealgstmt\rangle$\label{#1}}
\newcommand\stmtref[1]{$\langle\ref{#1}\rangle$}
\title{Nonnegative Matrix Factorization via Rank-One Downdate}
\author{Michael Biggs\thanks{Department of Statistics, University
of Waterloo, 200 University Ave.~W., Waterloo, Ontario, Canada
N2L 3G1, {\tt mike@doubleplum.net}.} \and
Ali Ghodsi\thanks{Department of Statistics, University
of Waterloo, 200 University Ave.~W., Waterloo, Ontario, Canada
N2L 3G1, {\tt aghodsib@uwaterloo.ca.}} \and
Stephen Vavasis\thanks{Department of Combinatorics and Optimization, 
University of Waterloo, 200 University Ave.~W., Waterloo, Ontario, Canada
N2L 3G1, {\tt vavasis@math.uwaterloo.ca.}}}
\begin{document}
\maketitle
\begin{center}
This preliminary version of the manuscript still has an incomplete
literature review and is missing the section on computational
testing.  It contains a complete proof of the main theorem.
Please check back here after June 15, 2008
for a more complete version of this manuscript.
\end{center}
\begin{abstract}
Nonnegative matrix factorization (NMF) was popularized as a tool for
data mining by Lee and Seung in 1999. NMF attempts to approximate a
matrix with nonnegative entries by a product of two low-rank
matrices, also with nonnegative entries. We propose an algorithm
called rank-one downdate (R1D) for computing a NMF that is partly motivated
by singular value decomposition. This algorithm
computes the dominant singular
values and vectors of adaptively determined submatrices of a matrix.
On each iteration, R1D extracts a rank-one submatrix from the dataset
according to an objective function.
We establish a theoretical result that  maximizing
this objective function corresponds to
correctly classifying articles in a nearly
separable corpus. We also provide computational experiments showing
the success of this method in identifying features in realistic
datasets.
\end{abstract}

\section{Nonnegative Matrix Factorization}
Several problems in information retrieval can be posed as
low-rank matrix approximation.  The seminal paper by
Deerwester et al.\ \cite{lsi} on latent
semantic indexing (LSI) showed that approximating
a term-document matrix describing a corpus of articles
via the SVD led to powerful query and classification techniques.
A drawback of LSI is that the low-rank factors in general will
have both positive and negative entries, and there is no obvious
statistical interpretation of the negative entries.  This led
Lee and Seung \cite{LeeSeung} among others to propose {\em nonnegative
matrix factorization}, that is, approximation of a matrix $A\in\R^{m\times}$
as a product of two factors $WH^T$, where $W\in\R^{m\times k}$,
$H\in\R^{n\times k}$, both have nonnegative
entries, and $k\le \min(m,n)$.
Lee and Seung showed intriguing results
with a corpus of images.  In a related work, Hofmann \cite{hofmann}
showed the application of NMF to text retrieval.
Nonnegative matrix factorization has its roots in work
of Gregory \cite{Gregory}, Paatero \cite{Paatero} and 
Cohen and Rothblum \cite{CohenRothblum}.

Since the problem is NP-hard \cite{vavasis:nmf_nphard}, it is not surprising that
no algorithm is known to solve NMF to optimality.
Heuristic
algorithms proposed for NMF have generally been based on incrementally 
improving the objective $\Vert A-WH^T\Vert$ in some norm using local
moves.
A particularly sophisticated example of local search is due, e.g.,
to Kim and Park \cite{KimPark}.   A drawback of local search is that
it is sensitive to initialization and also is sometimes difficult to
establish convergence.

We propose an NMF method based on greedy rank-one downdating that we
call R1D.  R1D is partly motived by Jordan's algorithm for computing
the SVD, which is described in Section~\ref{sec:algo}.
Unlike local search methods,
greedy methods do not require an initial guess.
In Section~\ref{sec:svd}, we compare our
algorithm to Jordan's SVD algorithm, which is the archetypal
greedy downdating procedure.
Previous
work on greedy downdating algorithms for NMF is the subject of 
Section~\ref{sec:related}.
In Section~\ref{sec:maintheorem}, we present the main theoretical
result of this paper, which states that in a certain model
of text due to Papadimitriou et al.~\cite{Papadimitriou}, optimizing our
objective function means correctly identifying a topic in a text
corpus.  Similarly, optimization of the objective function
corresponds to identifying a feature in a certain model of an
image database, as demonstrated in Section~\ref{sec:imagetheory}.
We then turn to computational experiments: 
in Section~\ref{sec:image}, we present results for R1D on image databases,
and in Section~\ref{sec:text}, we present results on text.

\section{Algorithm and Objective Function}
\label{sec:algo}

Rank-one downdate (R1D) is based on the simple observation that
the leading singular vectors of a nonnegative matrix are
nonnegative. This is a consequence of the Perron-Frobenius
theorem  \cite{GVL}. Based on this observation, it is trivial to compute
rank-one NMF. This idea can be extended to approximate higher
order NMF. Suppose we compute the
rank-one NMF and then subtract it from the original matrix. The
original matrix will no longer be nonnegative, but all negative
entries can be forced to be zero or positive and the procedure can
be repeated.

An improvement on this idea takes only a submatrix of the original
matrix and applies the Perron-Frobenius theorem.  The point is that
taking the whole matrix will in some sense average the features,
whereas a submatrix can pick out particular features.  A second point
of taking a submatrix is that a correctly chosen submatrix may be
very close to having rank one, so the step of forcing the residuals
to being zero will not introduce significant inaccuracy (since they
will already be close to zero).

The outer loop of the R1D algorithm is as follows.
\begin{tabbing}
\setcounter{algstmt}{0}
++\=+++\=++\=++\=++\=\kill
\> function $[W,H]={\tt R1D}(A,k)$ \\
\>Inputs: $A\in\R^{m\times n}$, $k>0$. \\
\>Outputs: $W\in\R^{m\times k}$, $H\in\R^{n\times k}$. \\
\> \algnumber{r0} \> for $\mu=1,\ldots,k$ \\
\> \algnumber{s1} \> \> $[M,N,\u,\v,\sigma]=\mbox{\tt ApproxRankOneSubmatrix}(A);$ \\
\>\algnumber{r15} \> \> $W(M,\mu)=\u(M)$. \\
\>\algnumber{r16} \> \> $H(N,\mu)=\sigma\v(N)$. \\
\>\algnumber{r17} \> \> $A(M,N)=0.$ \\
\>\algnumber{r18} \>  end for
\end{tabbing}
Here, $M$ is a subset of $\{1,\ldots,m\}$,  $N$ is a subset of
$\{1,\ldots,n\}$, $\u\in\R^m$, $\v\in\R^n$ and $\sigma\in\R$, and
$\u,\v$ are both unit vectors.
We follow Matlab subscripting conventions, so that $\u(M)$ denotes the
subvector of $\u$ indexed by $M$.  In the above algorithm,
$\u(\{1,\ldots,m\}-M)=\bz$ and $\v(\{1,\ldots,n\}-N)=\bz$.
The function {\tt ApproxRankOneSubmatrix} selects 
$M,N,\u(M),\v(N),\sigma$
so that $A(M,N)$ (i.e., the submatrix of
$A$ indexed by row set $M$ and column set $N$)  is approximately rank one,
and in particular, is approximately equal to $\u(M)\sigma\v^T(N)$.

This outer loop for NMF may be called ``greedy rank-one downdating'' since
it greedily tries to fill the columns of $W$ and $H$ from left to right
by finding good rank-one submatrices of $A$ and subtracting them from $A$.
The classical greedy rank-one downdating algorithm is Jordan's algorithm for
the SVD, described in Section~\ref{sec:svd}.  Related work on greedy
rank-one downdating for NMF is the topic of Section~\ref{sec:related}.

The subroutine {\tt ApproxRankOneSubmatrix}, presented later in this
section, is a heuristic routine to maximize the following objective
function:
\begin{equation}
f(M,N,\u,\sigma,\v)=\Vert A(M,N)\Vert_F^2-\gamma \Vert A(M,N)-\u(M)\sigma\v(N)^T\Vert_F^2.
\label{eq:objfunc}
\end{equation}
Here, $\gamma$ is a penalty parameter.
The Frobenius norm of an $m\times n$ matrix $B$,
denoted $\Vert B\Vert_F$, is defined
to be $\sqrt{B(1,1)^2+B(1,2)^2+\cdots+B(m,n)^2}$.
The rationale for \eref{eq:objfunc} is as follows:
the first term in \eref{eq:objfunc}
expresses the objective that $A(M,N)$ should be
large, while the second term penalizes departure of $A(M,N)$ from 
being a rank-one matrix. 

Since the optimal $\u,\sigma,\v$ come from
the SVD (once $M,N$ are fixed), the above objective function can
be rewritten just in terms of $M$ and $N$ as
\begin{eqnarray}
f(M,N) &=&\sum_{i=1}^p \sigma_i(A(M,N))^2 - \gamma\sum_{i=2}^p 
\sigma_i(A(M,N))^2 \nonumber \\
&=& \sigma_1(A(M,N))^2-(\gamma-1) \nonumber \\
& & \quad\mbox{}\cdot
(\sigma_2(A(M,N))^2+\cdots+\sigma_p(A(M,N))^2), \label{eq:objfunc2}
\end{eqnarray}
where $p=\min(|M|,|N|)$.
The penalty parameter $\gamma$ should be greater
than 1 so that the presence of low-rank contributions is
penalized rather than rewarded.

We conjecture that maximizing \eref{eq:objfunc} is NP-hard
(see Section~\ref{sec:nphard}), so we instead propose a heuristic
routine for optimizing it.
The
procedure alternates improving $(\v,N)$ and  $(\u,M)$.
The rationale for this alternation is that for
fixed $(\v,N)$,
the objective function \eref{eq:objfunc} is separable by rows of the
matrix.  Similarly, for fixed $(\u,M)$,
the objective function is separable by columns.  
Let us state and prove this as a lemma.

\begin{lemma}
Let $(\v,N)$ be the optimizing choice of these variables
in \eref{eq:objfunc}.  Then the optimal $M$ is determined
as follows.   Define 
\begin{equation}
g_i= -A(i,N)A(i,N)^T+\bar\gamma(A(i,N)\v(N))^2,
\label{eq:gidef}
\end{equation}
where 
\begin{equation}
\bar\gamma=\gamma/(\gamma -1).
\label{eq:bargammadef}
\end{equation}
Then $i\in M$ if $g_i\ge 0$.  (If exact equality $g_i=0$ holds, 
then including $i$ or not does not affect optimality.)
Furthermore, $\sigma u_i$ is optimally chosen to be $A(i,N)\v(N)$.
\label{lem:separable}
\end{lemma}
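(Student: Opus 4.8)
\medskip
\noindent\textbf{Proof plan.}
The idea is that once $(\v,N)$ is frozen, the objective \eref{eq:objfunc} decouples into $m$ independent contributions, one per row of $A$, so that the optimal $M$ together with $\u$ and $\sigma$ can be read off row by row. First I would fix $\v(N)$, a unit vector, and reparametrize the rank-one term by setting $w_i:=\sigma u_i$ for each $i\in M$: then the $i$-th row of $\u(M)\sigma\v(N)^T$ is $w_i\v(N)^T$, the unit-norm constraint on $\u$ disappears, and it is recovered at the end via $\sigma=\Vert w(M)\Vert$, $\u(M)=w(M)/\sigma$. Expanding the two Frobenius norms of \eref{eq:objfunc} as sums over the rows indexed by $M$ then gives
\begin{equation*}
f=\sum_{i\in M}\Bigl(\Vert A(i,N)\Vert^2-\gamma\,\bigl\Vert A(i,N)-w_i\,\v(N)^T\bigr\Vert^2\Bigr),
\end{equation*}
where the $i$-th summand depends only on $w_i$ and on whether $i\in M$.

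The next step is the per-row minimization. For fixed $i$, choosing $w_i$ to minimize $\Vert A(i,N)-w_i\v(N)^T\Vert^2$ is a scalar least-squares problem whose solution, since $\Vert\v(N)\Vert=1$, is the projection coefficient $w_i=A(i,N)\v(N)$, with residual $\Vert A(i,N)\Vert^2-(A(i,N)\v(N))^2$ by the Pythagorean identity. Substituting and writing $\Vert A(i,N)\Vert^2=A(i,N)A(i,N)^T$, the optimized $i$-th summand equals $-(\gamma-1)A(i,N)A(i,N)^T+\gamma(A(i,N)\v(N))^2=(\gamma-1)g_i$, with $g_i$ and $\bar\gamma$ exactly as in \eref{eq:gidef} and \eref{eq:bargammadef}. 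Hence $\fopt=(\gamma-1)\sum_{i\in M}g_i$; since $\gamma>1$, the coefficient $\gamma-1$ is positive, so the maximizing $M$ consists precisely of the indices with $g_i>0$ (indices with $g_i=0$ are immaterial), and by construction $\sigma u_i=w_i=A(i,N)\v(N)$, which is the final claim.

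I expect no substantive obstacle here; this is mostly bookkeeping. The points that need care are: the reparametrization $w_i=\sigma u_i$, which is what allows the norm constraint on $\u$ to be deferred; verifying that \emph{both} terms of \eref{eq:objfunc}, not just the first, are genuinely additive over rows once $(\v,N)$ is fixed; and tracking the sign of $\gamma-1$ so that the membership rule comes out as $g_i\ge0$ rather than reversed. Degenerate cases, such as $M=\emptyset$ when every $g_i<0$, or $w(M)=\bz$, do not affect the argument, and any sign constraint on $\u$ or $\sigma$ is automatically satisfied by $w_i=A(i,N)\v(N)$ when $A$ and $\v$ are nonnegative. Incidentally, optimality of $(\v,N)$ is never used: the conclusion holds for any fixed unit vector $\v(N)$.
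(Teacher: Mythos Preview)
Your proposal is correct and follows essentially the same argument as the paper: reparametrize $\sigma u_i$ as a single scalar (the paper calls it $\beta_i$, you call it $w_i$), observe that the objective decouples row by row, solve the resulting scalar least-squares to get $\beta_i=A(i,N)\v(N)$, and then read off the membership rule after the same algebraic simplification and rescaling by $\gamma-1$. Your additional remarks (that optimality of $(\v,N)$ is not actually used, and that degenerate cases are harmless) are accurate and not in the paper's version.
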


\noindent {\bf Remark 1.}  The lemma gives the formula
for optimal $\sigma u_i$ for each $i$, i.e., the  formula for 
the optimal $\sigma\u(M)$.
To obtain a formula for optimal $\u$ and $\sigma$ separately, we define
$\sigma:=\Vert \sigma\u(M)\Vert$ 
and $\u(M):=\sigma\u(M)/\Vert\sigma\u(M)\Vert$.

\noindent {\bf Remark 2.} Assuming instead that the optimizing choice
$(\u,M)$ is given, there is a similar formula for determining
membership in $N$.  Define
\begin{equation}
f_j= -A(M,j)^TA(M,j)+\bar\gamma(A(M,j)^T\u(M))^2,
\label{eq:fjdef}
\end{equation}
and take $N=\{j:f_j\ge 0\}$.

\begin{proof}
Observe that
$$f(M,N,\u,\sigma,\v)=\sum_{i=1}^m\chi_M(i)
\left(\Vert A(i,N)\Vert^2 - \gamma \Vert A(i,N)-\beta_i\v(N)^T\Vert^2\right),$$
where $\beta_i=\sigma u_i$ and $\chi_M(i)=1$ for $i\in M$ and
$\chi_M(i)=0$ for $i\notin M$.  Observe that $\beta_i$ occurs
only in the $i$th term of the above summation, hence assuming
$\v$ and $N$ are optimal, each
term may be optimized separately.  The optimal $\beta_i$ 
(that is, the minimizer of  $\Vert A(i,N)-\beta_i\v(N)^T\Vert$)
is $A(i,N)\v(N)$, the solution
to a simple linear least-squares minimization.
Thus, we conclude that putting row $i$ into index set
$M$ is improves the objective function if and only if
$g_i\ge 0$, where
$$g_i=\Vert A(i,N)\Vert^2 - \gamma \Vert A(i,N)-A(i,N)\v(N)\v(N)^T\Vert^2.$$
The formula for $g_i$ can be simplified as follows:
\begin{eqnarray*}
g_i
&=& A(i,N)A(i,N)^T \\
& & \quad\mbox{} - \gamma(A(i,N)-A(i,N)\v(N)\v(N)^T)(A(i,N)-A(i,N)\v(N)\v(N)^T)^T \\
&=& -(\gamma-1)A(i,N)A(i,N)^T+\gamma(A(i,N)\v(N))^2.
\end{eqnarray*}
Rescaling by $\gamma-1$ (which does not affect the acceptance
criterion) and substituting \eref{eq:bargammadef}, we
that row $i$ makes a positive contribution to the objective function
provided
$\bar\gamma(A(i,N)\v(N))^2-A(i,N)A(i,N)^T>0.$
\end{proof}

The next issue is choice of starting guess for $M,N,\u,\v,\sigma$.
The algorithm should be initialized with a starting guess that
has a positive score, else the rules for discarding rows and columns
could conceivable discard all rows or columns.  More strongly, in order
to improve the score of converged solution, it seems sensible to
select a starting guess with a high score.  For this reason, 
R1D uses as its starting guess a single column of $A$, and in particular,
the column of $A$ with the greatest
norm.  (A single row may also be chosen.)
It then chooses $\u$ to be the normalization of this
column.
This column is exactly rank one, so for the correct values of
$\sigma$ and $\v$ the first penalty term of 
\eref{eq:objfunc} is zero.  
We have derived the following algorithm
for the subroutine
{\tt ApproxRankOneSubmatrix} occurring in statement
\stmtref{s1} in R1D.

\begin{tabbing}
\setcounter{algstmt}{0}
++\=+++\=++\=++\=++\=\kill
\> function $[M,N,\u,\v,\sigma]=\mbox{\tt ApproxRankOneSubmatrix}(A);$ \\
\>Input: $A\in\R^{m\times n}$. \\
\>Outputs: $M\subset\{1,\ldots,m\}$, $N\subset\{1,\ldots,n\}$, 
$\u\in\R^m$, $\v\in\R^n$, $\sigma\in\R$. \\
\>Parameter: $\bar\gamma> 1$ \\
\> \algnumber{r1} \> Select $j_0\in\{1,\ldots,n\}$ to maximize 
$\Vert A(:,j_0)\Vert$. \\
\> \algnumber{r2} \> $M =\{1,\ldots,m\}$. \\
\> \algnumber{r3} \> $N=\{j_0\}$. \\
\> \algnumber{r4} \> $\sigma = \Vert A(:,j_0)\Vert.$\\
\> \algnumber{r5} \> $\u=A(:,j_0)/\sigma$. \\
\> \algnumber{r7} \> Repeat \\
\> \algnumber{r8} \> \> $\bar\v =A(M,:)^T\u(M).$ \\
\> \algnumber{r9} \> \> $N = \{j: \bar\gamma\bar v(j)^2-\Vert A(M,j)\Vert^2
>0\}$. \\
\> \algnumber{r10} \> \> $\v(N) = \bar\v(N)/\Vert\bar\v(N)\Vert.$ \hspace{0.5in}  /* Other entries of $\v$ unused */ \\
\> \algnumber{r11} \> \> $\bar \u=A(:,N)\v(N)$. \\
\>\algnumber{r12} \> \> $M=\{i: \bar\gamma\bar u(i)^2-\Vert A(i,N)\Vert^2
>0\}$. \\
\>\algnumber{r13} \>\> $\sigma = \Vert\u(M)\Vert$.\\
\>\algnumber{r13a}  \> \>$\u(M)=\bar\u(M)/\sigma$. \hspace{0.5in} /* Other entries of $\u$ unused */\\
\>\algnumber{r14}  \> until stagnation in $M,N,\u,\v,\sigma$.
\end{tabbing}

The `Repeat' loop is guaranteed to make progress because each iteration
increases the value of the objective function.  On the other hand, there
does not seem to be any easy way to derive a useful prior upper bound on
its number of iterations.  In practice, it proceeds quite quickly, usually
converging in 10--15 iterations.  But to guarantee fast termination, 
monotonicity can be forced on $M$ and $N$ by requiring $M$ to shrink and $N$ to
grow.  In other words, statement \stmtref{r9} can be replaced by
$$N = N\cup \{j: \bar\gamma\bar v(j)^2-\Vert A(M,j)\Vert^2>0\},$$
and statement \stmtref{r12} by
$$M=M-\{i: \bar\gamma\bar u(i)^2-\Vert A(i,N)\Vert^2\le 0\}.$$
Our experiments indicate that this change does not have a major impact
on the performance of R1D.

Another possible enhancement to the algorithm is as follows: we modify the
objective function by adding a second penalty term 
\begin{equation}
-\rho|M|\cdot |N|
\label{eq:secondpenalty}
\end{equation}
to \eref{eq:objfunc}
where $\rho>0$ is a parameter.
The purpose of this term is to penalize very low-norm rows
or columns from being inserted into $A(M,N)$ since they are probably
noisy.  For data with larger norm, the first term of
\eref{eq:objfunc} should dominate this penalty.  Notice that this
penalty term is also separable so it is easy to implement:
the formula in \stmtref{r9} is changed
to $\bar\gamma\bar v(j)^2-\Vert A(M,j)\Vert^2-\bar \rho|M|>0$ while the formula
in \stmtref{r12} becomes
$\bar\gamma\bar u(i)^2-\Vert A(i,N)\Vert^2-\bar\rho|N|>0$, where
$\bar \rho=\rho/(\gamma-1)$.  We may select
$\bar \rho$ so that 
the third term is a small fraction
(say $\bar\eta=1/20$) of the other terms
in the initial starting point.
This leads to the following
definition for $\rho$:
$$\rho=\bar\eta(\bar\gamma-1)\sigma^2 / m,$$
which may be computed immediately after \stmtref{r4}.

\section{Relationship to the SVD}
\label{sec:svd}
The classical rank-one greedy downdating algorithm is
Jordan's algorithm for computing the singular value
decomposition (SVD) \cite{Stew93}.
Recall that the SVD takes as input an $m\times n$ matrix $A$
and returns three factors $U,\Sigma,V$ such that 
$U\in\R^{m\times k}$ and $U$ has orthonormal columns
(i.e., $U^TU=I$), $\Sigma\in\R^{k\times k}$ and is diagonal
with nonnegative diagonal entries, and $V\in\R^{n\times k}$
also with orthonormal columns, such that $U\Sigma V^T$ is
the optimal rank-$k$ approximation to $A$ in either the 2-norm
or Frobenius norm.  (Recall that the 2-norm of an $m\times n$
matrix $B$, denoted $\Vert B\Vert_2$, is
defined to be $\sqrt{\lambda_{\max}(B^TB)}$, where $\lambda_{\max}$
denotes the maximum eigenvalue.)

\begin{tabbing}
\setcounter{algstmt}{0}
++\=+++\=++\=++\=++\=\kill
\>$[U,\Sigma,V]={\tt JordanSVD}(A,k);$ \\
\>Input: $A\in\R^{m\times n}$ and $k\le \min(m,n)$. \\
\>Outputs: $U,\Sigma,V$ as above. \\
\>\algnumber{svd1}\> for $\mu=1,\ldots,k$ \\
\>\algnumber{svd2}\> \> Select a random nonzero $\bar\u\in\R^m$. \\
\> \algnumber{svd4} \> \>$\sigma = \Vert \bar \u\Vert.$\\
\> \algnumber{svd5} \> \>$\u=\bar \u/\sigma$. \\
\> \algnumber{svd7} \> \>Repeat \hspace{0.5in} /* power method */\\
\> \algnumber{svd8} \> \>\> $\bar\v =A^T\u.$ \\
\> \algnumber{svd10} \>\> \> $\v = \bar\v/\Vert\bar\v\Vert.$  \\
\> \algnumber{svd11} \>\> \> $\bar \u=A\v$. \\
\>\algnumber{svd13} \>\> \>$\sigma = \Vert\bar\u\Vert$.\\
\>\algnumber{svd13a}  \> \>\>$\u=\bar\u/\sigma$.\\
\>\algnumber{svd14}  \> \>until stagnation in $\u,\sigma,\v$. \\
\>\algnumber{svd15}  \> \> $A=A-\u\sigma\v^T;$ \\
\>\algnumber{svd16}  \> \> $U(:,\mu)=\u;$ \\
\>\algnumber{svd17}  \> \> $V(:,\mu)=\v;$ \\
\>\algnumber{svd18}  \> \> $\Sigma(\mu,\mu)=\sigma;$ \\
\>\algnumber{svd19}\> end for
\end{tabbing}
Thus, we see that R1D is quite similar to the SVD.  The principal
difference is that R1D tries to find a submatrix indexed
by $M\times N$ at
the same time that it tries to identify the optimal $\u$ and $\v$.
Because of this similarity, the formulas for $\u$ and $\v$ occurring in 
\stmtref{r10} and \stmtref{r13a} of subroutine {\tt ApproxRankOneSubmatrix},
which were presented earlier as solutions to a least-squares problem,
may also be regarded as  steps in a power method.  In fact, 
if $M$ and $N$ are fixed, then the inner Repeat-loop 
of that subroutine will indeed converge to the dominant singular
triple of $A(M,N)$.

As noted earlier, use of the SVD
on term-document matrices dates back to latent semantic indexing due
to Deerwester et al.~\cite{lsi}.  Its effectiveness 
at creating a faithful low-dimensional model of a corpus in the
case of separable corpora was established by Papadimitriou et 
al.~\cite{Papadimitriou}.  Although not originally
 proposed specifically as a clustering
tool, the SVD has been observed to find good clusters in some
settings \cite{DhillonModha}.

The SVD, however, has a significant shortcoming as far as its use
for clustering.
Consider the following term-document matrix $A$,
which is a sum of a completely separable matrix $B$ and
noise matrix $E$:
\begin{eqnarray*}
A&=& B+ E \\
&=&
\left(
\begin{array}{cccc}
1.01 & 1.01 & 0 & 0 \\
1.01 & 1.01 & 0 & 0 \\
0 & 0 & 1 & 1 \\
0 & 0 & 1 & 1
\end{array}
\right)
+
\left(
\begin{array}{cccc}
-0.02 & -0.02 & 0.02 & 0.02 \\
0& 0 & 0 & 0 \\
0 & 0 & 0 & 0 \\
0 & 0 & 0 & 0
\end{array}
\right).
\end{eqnarray*}
It should be clear that there are two separate topics in $A$ given
by the two diagonal blocks, and a reasonable NMF algorithm ought
to be able to identify the two blocks.  In other words, for $k=2$,
one would expect an answer close to
$$W=H=\left(
\begin{array}{cc}
1 & 0 \\
1 & 0 \\
0 & 1 \\
0 & 1
\end{array}
\right).$$
Perhaps unexpectedly, the dominant right singular vector
of $A$ is very close to being proportional to $[1;1;1;1]$, i.e.,
the two topics are entangled in one singular vector.  The reason
for this behavior is that the matrix $B$ has two nearly equal
singular values, so its singular vectors are highly sensitive
to small perturbations (such as the matrix $E$). 
R1D avoids
this pitfall by computing the dominant singular vector of a submatrix
of the original $A$ instead of the whole matrix.

\section{Related Work}
\label{sec:related}
As mentioned in the introduction, most algorithms proposed in the
literature are based on forming an initial $W$ and $H$ and then
improving them by local search on an objective function.
The objective function usually includes a term of the form
$\Vert A-WH^T\Vert$ in some norm, and may include other terms.

A few previous works follow an approach similar to ours, namely,
greedy subtraction of rank-one matrices.  This includes
the work of Bergmann et al. \cite{bergman}, who identify the rank-one
matrix to subtract as the fixed point of an iterative process.
Asgarian and Greiner \cite{greiner} find the dominant singular
pair and then truncate it.  Gillis \cite{gillis} finds a
rank-one underestimator and subtracts that.    Boutsidis and Gallopoulos
\cite{Boutsidis}
consider the use of a greedy algorithm for initializing other
algorithm and make the following interesting observation:
The nonnegative part of a rank-one matrix has rank at most 2.

The main innovation herein is the idea that the search for the
rank-one submatrix should itself be an optimization subproblem.
This observation allows us to compare and rank one 
candidate submatrix to
another.
(Gillis also phrases his subproblem as optimization, although his
optimization problem does not explicitly seek submatrices like ours.)
A second innovation is our analysis showing in Section~\ref{sec:maintheorem}
that if the subproblemn
were solved optimally, then R1D would be able to accurately find
the topics in the Papadimitriou et al.~\cite{Papadimitriou} model of
$\eps$-separable corpora.

\section{Behavior of this objective function on a nearly separable corpus}
\label{sec:maintheorem}

In this section, we establish the main
theoretical result of the paper, namely, that the objective function
given by \eref{eq:objfunc} is able to correctly identify a topic
in a nearly separable corpus.  We define our {\em text model}
as follows. There is a universe of {\em terms} numbered
$1,\ldots,m$.  There is also a set of {\em topics}
numbered $1,\ldots,t$.  Topic $k$, for $k=1,\ldots,t$, is a probability
distribution over the terms.  Let $P(i,k)$ denote the probability of
term $i$ occurring in topic $k$.  Thus, $P$ is a singly stochastic
matrix, i.e., it has nonnegative entries with column sums exactly 1.
We assume also that there is a probability
distribution over topics; say the probability of topic $k$ is $\tau_k$,
for $k=1,\ldots,t$.  The text model is thus specified
by $P$ and $\tau_1,\ldots,\tau_t$.
We use 
the Zipf distribution
as the model of document length.  In particular, there is a number $L$ such
that all documents have length less than $L$,  and the probability that
a document of length $l$ occurs is
$$\frac{1/l}{1+1/2+\cdots+1/(L-1)}.$$
We have checked that the Zipf model is a good fit for several common
datasets. {\bf [SHOW SOME DATA.]}

A {\em document} is generated from this text model as follows.  First, topic
$k$ is chosen at random according to the probability distribution 
$\{\tau_1,\ldots,\tau_t\}$.  Then, a length $l$ is chosen at random 
from $\{1,\ldots,L-1\}$ according to the
Zipf distribution.  Finally, the document itself is chosen at random
by selecting $l$ terms independently
according to the probability distribution $P(:,k)$.
A {\em corpus} is
a set of $n$ documents chosen independently using this text model.
Its {\em term-document matrix} is the $m\times n$ matrix $A$ such that
$A(i,j)$ is the frequency of term $i$ in document $j$.

We further assume that the text model  is {\em $\eps$-separable}, 
meaning that
each topic $k$ is associated with a set of terms $T_k\subset \{1,\ldots,m\}$,
that $T_1,\ldots,T_t$ are mutually disjoint, and that 
$P(i,k)\le \eps$ for $i\notin T_k$, i.e., 
the probability that a document on topic $k$ will
use a term outside of $T_k$ is small.   Parameter $\epsilon$
must satisfy some inequalities described below.
This corpus model is quite similar to the model of Papadimitriou
et al.\ \cite{Papadimitriou}.  One difference is in the the document
length model.  Our model also relaxes several assumptions
of Papadimitriou et al.

Our main theorem is that the objective function in the previous
section can correctly find documents associated with a particular
topic in a corpus.

\begin{theorem}
Let $(P,(\tau_1,\ldots,\tau_t))$ specify a text model,
and let $\alpha>0$ be chosen arbitrarily.
Suppose there exists an $\eps\ge 0$ satisfying \eref{eq:epsdef} below
such that the text-model is $\eps$-separable with respect
to $T_1,\ldots, T_t$, the subsets of terms defining the topics.
Let $A$ be the term-document matrix of a corpus 
of $n$ documents drawn from this model when the 
document-length parameter is $L$.
Choose $\gamma=4$
in \eref{eq:objfunc}.
Then with probability tending to $1$ as $n\rightarrow\infty$
and $L\rightarrow\infty$
(refer to Assumption {\rm A1} below),
the optimizing pair $(M,N)$ of \eref{eq:objfunc} satisfies the following.
Let $D_1,\ldots,D_t$ be the partitioning of the columns of $A$
according to topics.
There exists a topic $k\in \{1,\ldots,t\}$ such that
$A(M,N)$ and $A(T_k,D_k)$ are nearly coincident in the following
sense.
$$
\sum_{(i,j)\in (M\times N) \bigtriangleup (T_k\times D_k)}
A(i,j)^2\le \alpha
\sum_{(i,j)\in M\times N} A(i,j)^2.$$
\label{thm:main}
\end{theorem}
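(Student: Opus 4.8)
The plan is to exploit the $\eps$-separable structure to show that, with high probability, the term-document matrix $A$ is close to a block-diagonal matrix: after a suitable reordering, the block $A(T_k, D_k)$ carries almost all of the mass of the columns in $D_k$, and the ``leakage'' mass $\sum_{i \notin T_k, j \in D_k} A(i,j)^2$ is small relative to the in-block mass. This requires a concentration argument: each document $j$ on topic $k$ has length drawn from the Zipf distribution, and conditioned on its length $l$, its entries are a multinomial sample from $P(:,k)$. The expected leakage per document is at most $l \eps$ in the $\ell^1$ sense (each of the $l$ draws lands outside $T_k$ with probability $\le (m - |T_k|)\eps$), and the in-block mass grows quadratically in the repeated-term counts, so for $l$ large (which is why we need $L \to \infty$) the ratio of leakage to in-block Frobenius mass is small with high probability. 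Summing over the $n$ documents and using $n \to \infty$ to control the fluctuations gives the global statement: $A$ is, up to relative Frobenius error $\delta(\eps, L)$, the block-diagonal matrix $\hat A$ with blocks $\hat A(T_k, D_k) := A(T_k, D_k)$. This is where the precise inequality \eref{eq:epsdef} on $\eps$ (relative to $\gamma = 4$ and $\alpha$) must be invoked; I would reverse-engineer \eref{eq:epsdef} from the error budget needed at the end.

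Next I would analyze the objective function \eref{eq:objfunc}, equivalently \eref{eq:objfunc2}, on the idealized block-diagonal matrix $\hat A$. For a candidate pair $(M,N)$, write $\hat A(M,N)$ as a block matrix indexed by which topics the rows and columns belong to. The key algebraic fact is that for $\gamma = 4$ (so $\gamma - 1 = 3 > 1$), the objective $\sigma_1^2 - 3(\sigma_2^2 + \cdots + \sigma_p^2)$ strongly rewards submatrices that are rank one and strongly punishes any second singular value. If $(M,N)$ mixes two or more topic blocks, then $\hat A(M,N)$ has at least two nonzero singular values (one per block actually hit), and one checks that deleting the rows and columns of the smaller-contribution block strictly increases the objective; hence the optimum over block-diagonal matrices is attained by a single pure block $(M,N) = (T_k, D_k)$ for some $k$ — specifically the one maximizing $\Vert \hat A(T_k, D_k)\Vert_F^2$, since within a pure block the best submatrix is the whole block (removing any row or column only decreases the single squared singular value $\Vert \hat A(T_k,D_k)\Vert_F^2$, as a pure block is itself close to rank one... actually one must be slightly careful here and argue that each pure block $A(T_k,D_k)$ is itself nearly rank one with high probability, which follows from the same concentration reasoning since all columns of that block are near-copies of $\mathbb{E}[\text{document} \mid \text{topic }k] = P(:,k)$ scaled by length).

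The final step is a perturbation/robustness argument transferring the conclusion from $\hat A$ to $A$. Let $\fopt(A)$ and $\fopt(\hat A)$ denote the optimal objective values. Since $\Vert A - \hat A\Vert_F \le \delta \Vert A\Vert_F$ and $\Vert A(M,N) - \hat A(M,N)\Vert_F \le \Vert A - \hat A\Vert_F$ for every $(M,N)$, and since $f$ as a function of the matrix entries is Lipschitz on bounded sets (both terms of \eref{eq:objfunc} are, after optimizing out $\u,\sigma,\v$, differences of sums of squared singular values, and singular values are $1$-Lipschitz in Frobenius norm), the optimal value and near-optimal sets are stable: the true optimizer $(M^\ast, N^\ast)$ of $f$ on $A$ must have $f(\hat A; M^\ast, N^\ast)$ within $O(\delta \Vert A\Vert_F^2)$ of $\fopt(\hat A)$, which by the margin established in the previous step forces $(M^\ast, N^\ast)$ to agree with some $(T_k, D_k)$ up to a symmetric difference whose mass is $O(\delta / \text{margin})$ times the total. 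Choosing the constants in \eref{eq:epsdef} so that this is $\le \alpha$ completes the proof.

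\medskip

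\noindent\textbf{Main obstacle.} I expect the hard part to be the spectral-margin estimate for mixed blocks, i.e., proving that the objective strictly prefers a single block and quantifying \emph{how much} it prefers it, uniformly over all $2^m \cdot 2^n$ candidate pairs $(M,N)$. The difficulty is twofold: first, a mixed $(M,N)$ need not be block-diagonal in the induced ordering if $M$ or $N$ straddles several $T_k$'s unevenly, so $\sigma_2(\hat A(M,N))$ could be small if one block contributes only a sliver — yet one still has to show removing that sliver helps, which is a delicate ``local move improves the objective'' computation that must beat the worst case. Second, and more seriously, the perturbation argument has to be uniform over exponentially many subsets while the error $\delta$ is only polynomially small in $1/L$ and $1/\sqrt n$; controlling this requires that the per-column leakage be small \emph{almost surely} (not just in expectation), which in turn is what pins down the admissible range of $\eps$ in \eref{eq:epsdef} and forces the joint limit $n, L \to \infty$ with the rate coupling spelled out in Assumption~A1.
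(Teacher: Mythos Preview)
Your high-level strategy---approximate $A$ by a block-diagonal $\hat A$, solve exactly on $\hat A$, then perturb back---is natural, but it diverges from the paper's route and runs into exactly the obstacle you flag at the end. The paper never forms $\hat A$ and never does a global perturbation argument over all $2^m\cdot 2^n$ subsets. Instead it works directly on $A$ and exploits two structural facts about the objective that you do not use.

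First, the objective \eref{eq:objfunc} is \emph{separable}: once the optimal $(\u,M)$ is fixed, the decision of whether to include column $j$ in $N$ depends only on the sign of $f_j=\bar\gamma(A(M,j)^T\u)^2-\Vert A(M,j)\Vert^2$ (Lemma~\ref{lem:separable}). This turns the combinatorial problem into a collection of one-dimensional decisions. Second, and this is the key technical lemma you are missing, from separability one gets a clean \emph{mutual-exclusion} criterion (Lemma~\ref{lem:mutualexclus}): if two columns $j,j'$ satisfy
\[
\frac{A(M,j)^TA(M,j')}{\Vert A(M,j)\Vert\,\Vert A(M,j')\Vert}<1-\frac{2}{\gamma},
\]
then for \emph{every} unit vector $\u$ at least one of $f_j,f_{j'}$ is negative, so not both can be in the optimal $N$. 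With $\gamma=4$ this threshold is $1/2$, and one checks directly on $A$ (not on $\hat A$) that acceptable heavy columns from distinct topics have cosine below $1/2$. This single inequality replaces your entire ``spectral margin for mixed blocks'' program and is uniform automatically, since it is a pointwise statement about pairs of columns rather than a global comparison of objective values.

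The remainder of the paper's proof then shows that the optimal $A(M,N)$ must contain many heavy acceptable entries (by comparing $\fopt$ to upper bounds on the non-heavy and unacceptable mass), that these are all from a single topic $k$ (by mutual exclusion), that consequently the dominant singular vectors of $A(M,N)$ are close to $P(M,k)$ and $\l(N\cap D_k)$ (a singular-vector perturbation bound, Lemma~\ref{lem:approxsingvec}), and finally that every acceptable column of $D_k$ and every heavy row of $T_k$ passes the separability test and hence must lie in $N,M$ respectively (Lemmas~\ref{lem:dksubsN}--\ref{lem:hksubsM}). The symmetric-difference bound then falls out.

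Your perturbation route would also need to handle the fact that even on $\hat A$ the pure blocks are not rank one (they are multinomial samples), so the optimum is not literally $(T_k,D_k)$ but something close to it in mass; you note this but do not say how to control it. The paper handles the analogous issue by distinguishing ``heavy'' terms $H_k\subset T_k$ and ``acceptable'' documents $D_k'\subset D_k$, proving containment only for those, and then bounding the leftover mass. If you want to salvage your approach, the missing ingredient is precisely a version of the angle-based mutual-exclusion lemma applied directly to $A$; once you have that, the detour through $\hat A$ becomes unnecessary.
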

Here, $X\bigtriangleup Y$ denotes the set-theoretic symmetric
difference $(X-Y)\cup(Y-X)$.

The organization of the proof of this theorem is as follows.
We first analyze the Zipf distribution and propose some assumptions
that hold with probability tending to $1$ as $n,L\rightarrow \infty$.
Under these assumptions, we make some preliminary estimates of
norms of submatrices of $A$.  Then we prove a sequence of
lemmas as follows.
\begin{itemize}
\item
In Lemma~\ref{lem:foptlb}, we establish a lower bound on the
optimal value
of the objective function by analyzing the objective function
value with the choice $M=T_k$, $N=D_k'$, where $D_k'\subset D_k$
are the `acceptable' documents from $D_k$ (defined below).
\item
In Lemma~\ref{lem:heavynotacc}, we establish an upper bound
on the contribution from unacceptable entries to the optimal solution.
\item
In Lemma~\ref{lem:heavyacc}, we deduce as a consequence of 
the two preceding lemmas that heavy
acceptable entries must compose a significant portion of the
optimal solution.  Here, an entry $A(i,j)$ is {\em heavy}
if $P(i,k)\ge\chi$,
where $\chi$is a scalar defined below and $k$ is the topic of document $j$.
\item
In Lemma~\ref{lem:uniqueheavy}, we show that the optimal solution
cannot contain heavy acceptable entries from two different topics.
\item
Thus, the preceding lemmas imply that heavy acceptable entries from
a single topic $k$ must dominate the optimal solution.  Therefore,
we show in Lemma~\ref{lem:approxsingvec} that the
left and right  singular vectors
of the optimal $A(M,N)$ can be estimated from $P(M,k)$ and
the vector of lengths of documents indexed by $N$ respectively.
\item
In Lemma~\ref{lem:incrobj}, we give a general condition under
which adding a row or column to $M$ or $N$ could improve the objective
function value. 
\item
In Lemma~\ref{lem:dksubsN}, we show that any column in $D_k'$ satisfies
the condition given by Lemma~\ref{lem:incrobj} (because of the estimate
of the left singular vector given by Lemma~\ref{lem:approxsingvec}), and
therefore $D_k'\subset N$ if $N$ is optimal.
\item
In Lemma~\ref{lem:hksubsM}, we establish using analogous reasoning that
the heavy terms $H_k$ of topic $k$ must be a subset of $M$ if $M$ is optimal.
\item
Finally, the theorem can be proved because all entries of $A(M,N)$
that are not from $H_k\times D_k'$ are either not heavy or unacceptable
but in either case, must have small norm.
\end{itemize}

We start by stating the inequality that $\eps$ must satisfy in order
for the theorem to hold.  It should be noted that the constants
that follow are quite large but are likely large overestimates.
Let $P_{\min}=\min\{P(i,k): i\in T_k, k=1,\ldots,t\}$.  Without loss
of generality, $P_{\min}>0$ since any row $i\in T_k$ such that $P(i,k)=0$
may be removed from $T_k$ without affecting the validity of the model.
The proof requires four parameters, $\eps$, $\theta$, $\phi$ and
$\chi$.  The parameters depend on $\alpha$,
$m$, $t$,  and $P_{\min}$.
They do not depend on 
$n$ and $L$ (since the theorem requires $n\rightarrow\infty$
and $L\rightarrow\infty$).

First, we define 
$$\chi=\min(\chi_1,\chi_2),$$
where 
\begin{eqnarray}
\chi_1&=&\sqrt{3/(32\cdot 16\cdot 25^2\cdot 256t)}/m, 
\label{eq:chidef1} \\
\chi_2&=&\sqrt{3\alpha/(32\cdot512t)}/m.
\label{eq:chidef2}
\end{eqnarray}
Next, we choose
$$\theta=\min(\theta_1,\theta_2,\theta_3),$$
where
\begin{eqnarray}
\theta_1&=&P_{\min}/2,
\label{eq:thetadef1} \\
\theta_2 &=& \sqrt{3/(8\cdot 18\cdot 278\cdot t)}/m,
\label{eq:thetadef5} \\
\theta_3 &=&\chi\cdot\sqrt{1/(6\cdot278\cdot mt)}.
\label{eq:thetadef6}
\end{eqnarray}
Third, take
$$\phi=\min(\phi_1,\phi_2,\phi_3),$$
where
\begin{eqnarray}
\phi_1 &=& 3/(16\cdot256\cdot3\cdot2\cdot25^2mt),
\label{eq:phidef2} \\
\phi_2 &=& 3\chi^2/(64\cdot 6\cdot278\cdot mt),
\label{eq:phidef3} \\
\phi_3 &=& \alpha/(32\cdot 512mt).
\label{eq:phidef4}
\end{eqnarray}
Last,
\begin{equation}
\eps=\min(\sqrt{3/(10m)},\chi,\theta).
\label{eq:epsdef}
\end{equation}

We start with our assumptions of the form 
that $n$ or $L$ must be sufficiently large.  The inequalities
in this assumption are needed below.
Let $n_k=|D_k|$, that is, the number of documents on topic $k$,
$k=1,\ldots,t$. Let
$\tau_{\min}=\min(\tau_1,\ldots,\tau_t)$.

\vspace{0.05in}
{\bf Assumption A1.}
Let $q=\log_2 L$.
Assume $n$ and $L$ are sufficiently large so that all of the following
are valid.  
\begin{eqnarray*}
m\exp(-2L^{1/2}\theta^2)&\le&\phi, \\
L &\ge & 3q/(16\phi), \\
n &\ge & q^3, \\
n\tau_{\min} & \ge & 20q. \\
\end{eqnarray*}
The first two inequalities are lower bounds on $L$, and the last two
say that $L$ cannot grow much faster than $n$.
Also, assume $L$ is a power of 4 so that $q$ is an even
integer.  (This last assumptions is not
necessary but simplifies notation.)

\vspace{0.05in}
The next four assumptions are also needed
for our analysis and are valid with probability
tending to 1 as $n,L\rightarrow\infty$ provided Assumption
A1 holds.
The mean value for $n_k$ is $n\tau_k$, so let us impose
the following assumption.

\vspace{0.05in}
\noindent{\bf Assumption A2.}
For each $k$, $n_k\tau_k/2 \le n_k \le 2n_k\tau_k$.

\vspace{0.05in}
By the Chernoff-Hoeffding bound and the union bound, this assumption
will fail with probability at most
$t\exp(-2n\tau_{\min})$.  This quantity tends to 0 as $n\rightarrow\infty$.

Next, let us provide some estimates for the Zipf distribution.  
Let us partition $D_k$ into subsets $C_{k,1},\ldots,C_{k,q}$ where 
$C_{k,\iota}$ contains
documents of $D_k$ of
length $[2^{\iota-1}, 2^\iota)$ and $q=\log_2 L$, an even integer
by assumption.
It follows from underestimating the Zipf distribution using an integral
that the probability that a document lies in $C_{k,\iota}$ is at least
$(n_k/q-2)/n_k=1/q-2/n_k$.  
We have assumed in A2 that $n_k\ge n\tau_{\min}/2$ and in A1 that
$n\tau_{\min}/2\ge 10q$.
The choice of lengths are independent
trials, and the mean size of $C_{k,\iota}$ is at least $n_k/q-2$,
All of these bounds lead to the following.

\vspace{0.05in}
\noindent{\bf Assumption A3.}
For each $k=1,\ldots,t$ and $\iota=1,\ldots,q$,
$|C_{k,\iota}|\ge n_k/(2q)$.

\vspace{0.05in}
The probability
of failure of this assumption is at most $qt\exp(-n_k/(8q))$
which again tends to zero since $n_k/(8q)\ge n\tau_{\min}/(16q)$
by A2 and $n\tau_{\min}/(16q)\ge
q^2\tau_{min}/16$ by A1, and finally, $q\rightarrow\infty$.
A consequence of A3 is that
the number of documents that have
length at least $L^{1/2}$ (i.e., those in $C_{k,q/2+1}\cup\cdots\cup C_{k,q}$)
is at least $n_k/4$.
We also need an upper bound on $|C_{k,\iota}|$.  The mean value
of this quantity is at most $n_k/q+1$ using an integral to overestimate
the Zipf distribution.  Since the documents are chosen independently,
we obtain the following.

\vspace{0.05in}
\noindent{\bf Assumption A4.}
For each
$k=1,\ldots,t$ and $\iota=1,\ldots,q$,
$|C_{k,\iota}|\le 2n_k/q$.

\vspace{0.05in}
The probability of failure is
$qt\exp(-2n_k/q^2)$ by the Chernoff-Hoeffding bound. Using arguments 
similar to those in the previous paragraph, this tends to 0
as $n,L\rightarrow\infty$ under A1.

Let $j\in D_k$ index a document on topic $k$ whose length we denote
as $l_j$.
The mean value for $A(:,j)$ is $l_jP(:,k)$ by
the properties of the multinomial distribution.  Let us now consider
the probability that any $A(i,j)$ diverges significantly 
from the mean, e.g., say $|A(i,j)-l_jP(i,k)|\ge l_j\theta$.
Again, by the Chernoff-Hoeffding bound, this probability is at
most $\exp(-2l_j\theta^2)$, so using a union bound, the probability
that any entry will diverge by $l_j\theta$ from its mean
is at most $m\exp(-2l_j\theta^2)$.  If we further assume $l_j\ge L^{1/2}$, 
this quantity
is at most $m\exp(-2L^{1/2}\theta^2)$.   
We have assumed in A1 that  $L$ is sufficiently
large so that $m\exp(-2L^{1/2}\theta^2)\le \phi$, where
$\phi$ is the parameter
given by \eref{eq:phidef2}--\eref{eq:phidef4}. 

We say that a column $j\in D_k$
is {\em acceptable} if its length $l_j$ is at least $L^{1/2}$
and if the distance of each entry from its mean is
at most $\theta l_j$.  Let $D_{\rm acc}$ denote the subset of 
$\{1,\ldots,n\}$ of acceptable documents and $D_{\rm unacc}$ its
complement.
By the assumptions so far, the number of documents 
with length at least $L^{1/2}$ in topic $k$ is at least $n_k/4$.
Let $D_k'$ denote $D_k\cap D_{\rm acc}$, i.e., 
the acceptable subset of $D_k$ and
let $C_{k,{q/2+1}}',\ldots,C_{k,q}'$ denote the acceptable subsets of
$C_{k,q/2+1},\ldots,C_{k,q}$. 
We now impose the last assumption.

\vspace{0.05in}
\noindent
{\bf Assumption A5.}
The acceptable subset of each $C_{k,\iota}$ ($\iota=q/2+1,\ldots,q$)
has size at least $|C_{k,\iota}|(1-2\phi)$.  

\vspace{0.05in}
By the union bound,
this assumption fails with probability at most
$$\sum_{k=1}^t\sum_{\iota=q/2+1}^q \exp(-2|C_{k,\iota}|\phi),$$
which, according to prior assumptions, is at most
$(qt/2)\exp(-n\phi\tau_{\min}/(2q))$. 
Again, from A1 and A2, this probability tends to 0 for large
$n$ and $L$.

Let us now 
derive some inequalities useful for the upcoming analysis.
A simple inequality is
\begin{equation}
\Vert A(:,j)\Vert \le l_j
\label{eq:ajltlj}
\end{equation}
which follows because of the inequality
$\Vert\x\Vert_2\le\Vert\x\Vert_1$.
Another simple inequality is that if $a,b$ are both nonnegative,
then 
\begin{equation}
(a-b)^2\le a^2+b^2.
\label{eq:aminusb}
\end{equation}

Let $\l\in\R^n$ denote the vector $(l_1,\ldots,l_n)$
of document lengths. 
We now establish some needed norm estimates for $\l$.
\begin{eqnarray}
\Vert \l(D_k')\Vert^2 &=&
\sum_{j\in  D_k'}l_j^2 
\nonumber \\
& = &
\sum_{\iota=q/2+1}^q\sum_{j\in C_{k,\iota}'}l_j^2 
\nonumber \\
&\ge & 
\sum_{\iota=q/2+1}^q\sum_{j\in C_{k,\iota}'} 2^{2\iota-2}
\nonumber \\
&=&
\sum_{\iota=q/2+1}^q 2^{2\iota-2} |C_{k,\iota}'|
\nonumber \\
&\ge &
\sum_{\iota=q/2+1}^q 2^{2\iota-2} (1-2\phi)|C_{k,\iota}|
\nonumber \\
&\ge &
\sum_{\iota=q/2+1}^q 2^{2\iota-2} (1-2\phi)n_k/(2q)
\nonumber \\
&\ge &
(1-2\phi)n_kL^2/(8q).
\label{eq:ldkplb}
\end{eqnarray}
Here, Assumption A5 was used for the fifth line and A3 for second.
A useful upper bound is:
\begin{eqnarray}
\Vert \l(D_k)\Vert^2 &=& 
\sum_{j\in  D_k}l_j^2 
\nonumber \\
& = &
\sum_{\iota=1}^q\sum_{j\in C_{k,\iota}}l_j^2 
\nonumber \\
&\le & 
\sum_{\iota=1}^q\sum_{j\in C_{k,\iota}}2^{2\iota}
\nonumber \\
&=&
\sum_{\iota=1}^q 2^{2\iota} |C_{k,\iota}|
\nonumber \\
&\le &
\sum_{\iota=1}^q 2^{2\iota} \cdot 2n_k/q
\nonumber \\
&\le &
8n_kL^2/(3q).
\label{eq:ldkub}
\end{eqnarray}
Since $\Vert\l\Vert^2=\Vert\l(D_1)\Vert^2+\cdots+\Vert\l(D_t)\Vert^2$,
\begin{equation}
\Vert\l\Vert^2 \le 8nL^2/(3q).
\label{eq:lub}
\end{equation}

Some final estimates concern the sum of
squares of lengths of unacceptable documents.  A document can be
unacceptable either because its length is less than $L^{1/2}$
(i.e., it lies in $C_{k,\iota}$ for some $k=1,\ldots,t$ and some
$\iota=1,\ldots,q/2$) or else because its term frequencies deviate too
much from the mean (by more than $l_j\theta$ in some position). 
In the former case, all document lengths are bounded by $L^{1/2}$,
hence squared document lengths are bounded by $L$.
For the latter case, we can apply Assumption A5.
Thus, we have the following estimate on unacceptable
documents:
\begin{eqnarray*}
\Vert\l(D_k-D_k')\Vert^2 &=&
\Vert\l(C_{k,1}\cup\cdots\cup C_{k,q/2})\Vert^2 \\
& & \quad\mbox{}+
\Vert\l(C_{k,q/2+1}\cup\cdots\cup C_{k,q}-C_{k,q/2+1}'-\cdots
-C_{k,q}')\Vert^2 \\
&\le & n_kL + \sum_{\iota=q/2+1}^q |C_{k,\iota}-C_{k,\iota}'| 2^{2\iota} \\
&\le & n_kL + \sum_{\iota=q/2+1}^q 2\phi|C_{k,\iota}| 2^{2\iota} \\
&\le & n_kL+16\phi n_k L^2 / (3q).
\end{eqnarray*}
Here, Assumption A3 was used for the fourth line.
We can combine these contributions from individual topics to obtain
the upper bound:
\begin{equation}
\Vert \l(D_{\rm unacc})\Vert^2 \le
nL+16\phi nL^2/(3q).
\label{eq:ldunacc0}
\end{equation}

These estimates can be extended to sum of squares of the entries of
$A$:
\begin{eqnarray*}
\Vert A(:,D_k-D_k')\Vert_F^2 &= &
\sum_{j\in D_k-D_k'}\sum_{i=1}^m A(i,j)^2 \\
& \le &
\sum_{j\in D_k-D_k'}l_j^2 \\
&\le &
n_kL+16\phi n_kL^2/(3q).
\end{eqnarray*}
The second line follows from \eref{eq:ajltlj}.

Thus,
$$\Vert A(:,D_{\rm unacc})\Vert^2_F \le nL+16\phi nL^2/(3q).$$
Recalling from Assumption A1 that  $L\ge 3q/(16\phi)$,
the second term dominates in the above four inequalities, so
\begin{eqnarray}
\Vert \l(D_k-D_k')\Vert^2 &\le& 32\phi n_kL^2/(3q), \nonumber\\
\Vert A(:,D_k-D_k')\Vert_F^2 &\le&32\phi n_kL^2/(3q), \nonumber\\
\Vert \l(D_{\rm unacc})\Vert^2 &\le& 32\phi nL^2/(3q), \label{eq:ldunacc1} \\
\Vert A(:,D_{\rm unacc})\Vert_F^2 &\le& 32\phi nL^2/(3q).
\label{eq:adunacc0}
\end{eqnarray}
Because of \eref{eq:phidef2},
\begin{eqnarray}
\Vert \l(D_k-D_k')\Vert^2 &\le& n_kL^2/(16\cdot256\cdot25^2\cdot qtm), \\
\Vert A(:,D_k-D_k')\Vert_F^2 &\le& n_kL^2/(16\cdot256\cdot25^2\cdot qtm),\\
\Vert \l(D_{\rm unacc})\Vert^2 &\le& nL^2/(16\cdot256\cdot25^2\cdot qtm), 
\label{eq:ldunacc}\\
\Vert A(:,D_{\rm unacc})\Vert_F^2 &\le& nL^2/(16\cdot256\cdot25^2\cdot qtm).
\label{eq:adunacc}
\end{eqnarray}

With these preliminary inequalities in hand, we may
now begin the first lemma in the proof of the main theorem.

\begin{lemma}
Under Assumptions {\rm A1--A5},
\begin{equation}
\fopt \ge nL^2/(256qtm),
\label{eq:fopt}
\end{equation}
where $\fopt$ denotes the optimal value of \eref{eq:objfunc}.
\label{lem:foptlb}
\end{lemma}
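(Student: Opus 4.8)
The plan is to produce an explicit feasible point for \eref{eq:objfunc} whose value already meets the bound; since $\fopt$ is the maximum, this suffices. First I would pick $k$ to be a topic with the most documents, so that $n_k\ge n/t$ because $D_1,\ldots,D_t$ partition the $n$ columns. Then I take $M=T_k$ (nonempty by the normalization $P_{\min}>0$) and $N=D_k'$ (nonempty by Assumptions A3 and A5), and for the rank-one piece I use the ``population'' rank-one matrix built from the means: $\v(N)=\l(D_k')/\Vert\l(D_k')\Vert$, $\u(M)=P(T_k,k)/\Vert P(T_k,k)\Vert$, all other entries of $\u$ and $\v$ set to $\bz$, and $\sigma=\Vert P(T_k,k)\Vert\cdot\Vert\l(D_k')\Vert$, so that $\u(M)\sigma\v(N)^T=P(T_k,k)\l(D_k')^T$. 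This has the support and unit-norm structure required by the algorithm, so with $\gamma=4$,
\[
\fopt\ge\Vert A(T_k,D_k')\Vert_F^2-4\Vert A(T_k,D_k')-P(T_k,k)\l(D_k')^T\Vert_F^2.
\]

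Next I would bound the two terms of this expression using the norm estimates already assembled. For the penalty term, acceptability of a column $j\in D_k'$ gives $|A(i,j)-l_jP(i,k)|\le l_j\theta$ for every row $i$; summing over the at most $m$ rows in $T_k$ and then over $j\in D_k'$ yields $\Vert A(T_k,D_k')-P(T_k,k)\l(D_k')^T\Vert_F^2\le m\theta^2\Vert\l(D_k')\Vert^2$. For the first term, since $\theta\le\theta_1=P_{\min}/2$ while $P(i,k)\ge P_{\min}$ for $i\in T_k$, acceptability forces $A(i,j)\ge l_j(P(i,k)-\theta)\ge l_jP(i,k)/2$, hence $\Vert A(T_k,D_k')\Vert_F^2\ge\tfrac14\Vert P(T_k,k)\Vert^2\Vert\l(D_k')\Vert^2$. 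Finally $\eps$-separability together with $\eps\le\chi<1/m$ concentrates the topic mass on $T_k$, so $\sum_{i\in T_k}P(i,k)\ge1-m\eps$ and, by Cauchy--Schwarz, $\Vert P(T_k,k)\Vert^2\ge(1-m\eps)^2/|T_k|\ge(1-m\eps)^2/m$.

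Putting these together gives $\fopt\ge\bigl(\tfrac14(1-m\eps)^2/m-4m\theta^2\bigr)\Vert\l(D_k')\Vert^2$, and then the lower bound \eref{eq:ldkplb} on $\Vert\l(D_k')\Vert^2$, combined with $n_k\ge n/t$, produces a bound of the shape $c\cdot nL^2/(qtm)$. The only remaining work --- and the only real obstacle --- is the constant bookkeeping: one checks that the tiny values prescribed for $\theta$, $\phi$, $\chi$, $\eps$ make $4m\theta^2$ negligible against $(1-m\eps)^2/(4m)$ and make $1-m\eps$ and $1-2\phi$ close to $1$, so that the accumulated constant (from the factor $\tfrac14$, the deviation slack, the $\tfrac18$ in \eref{eq:ldkplb}, and the $1/t$) comfortably exceeds $1/256$. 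Since those parameters were deliberately set very conservatively, this leaves a wide margin and involves no conceptual difficulty.
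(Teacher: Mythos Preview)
Your proposal is correct and follows essentially the same approach as the paper: pick $M=T_k$, $N=D_k'$, and the rank-one matrix $P(T_k,k)\l(D_k')^T$, then bound the two terms using acceptability, the estimate \eref{eq:ldkplb}, and the $\theta$ constraints. The only cosmetic differences are that the paper chooses $k$ at the end rather than the beginning, bounds the penalty over $D_k$ rather than $D_k'$, and obtains $\Vert P(T_k,k)\Vert^2\ge 1/(2m)$ via $\Vert P(:,k)\Vert^2\ge 1/m$ minus the off-$T_k$ mass $m\eps^2$ rather than via Cauchy--Schwarz on the $1$-norm; all routes lead to the same constant with room to spare.
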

\begin{proof}
The proof follows from
estimating the value of the objective function for the
choices
$M=T_k$ and $N=D_k'$.  
We can estimate the first term in \eref{eq:objfunc} as
\begin{eqnarray}
\Vert A(M,N) \Vert_F^2 &=& \sum_{j\in D_k'}\sum_{i_\in T_k} A(i,j)^2 
\nonumber \\
&\ge &\sum_{j\in D_k'}\sum_{i\in T_k} l_j^2(P(i,k)-\theta)^2 
\nonumber \\
& \ge &\sum_{j\in D_k'}\sum_{i\in T_k} l_j^2P(i,k)^2/4 
\nonumber \\
& = & 
\Vert \l(D_k')\Vert^2\cdot \sum_{i\in T_k}P(i,k)^2/4 
\nonumber \\
& \ge & (1-2\phi)n_kL^2\Vert P(T_k,k)\Vert^2/(32q) 
\nonumber \\
& \ge & (1-2\phi)n_kL^2/(64qm) 
\nonumber \\
&\ge & n_kL^2/(128qm).
\label{eq:fopt1} 
\end{eqnarray}
The second line follows by the definition of `acceptable.'  The third follow
because $\theta\le P(i,k)/2$ by \eref{eq:thetadef1}.
The fifth line relies on \eref{eq:ldkplb}, the next on the fact
that $\Vert P(T_k,k)\Vert^2\ge 1/(2m)$ because $\Vert P(:,k)\Vert^2\ge 1/m$
(which follows from $\Vert P(:,k)\Vert_1=1$) and 
$\Vert P(\{1,\ldots,m\}-T_k,k)\Vert^2\le m\eps^2\le 3/(10m)$ from
\eref{eq:epsdef}.  The last line uses
$\phi\le 1/4$ because of \eref{eq:phidef2}.

Now we turn to the second term in \eref{eq:objfunc}.
Choose $\u,\v,\sigma$ so that 
$\u\sigma\v^T=P(T_k,k)\l(N)^T$, a rank-one matrix,
where as above $N=D_k'$.
Since $|A(i,j)-l_jP(i,k)|\le l_j\theta$ when $j$ is
acceptable, we have the following
bound for the second term.
\begin{eqnarray*}
\gamma\Vert A(M,N)-\u\sigma\v^T\Vert_F^2
&=& 
\gamma\sum_{j\in D_k'}\sum_{i\in T_k} (A(i,j)-l_jP(i,k))^2 \\
&\le &
\gamma\sum_{j\in D_k}\sum_{i\in T_k} l_j^2\theta^2 \\
&=&
\gamma\theta^2 \Vert \l(D_k)\Vert^2 \cdot |T_k|\\
&\le &
8\gamma \theta^2 mn_kL^2/(3q)
\\
&\le & n_kL^2/(256qm).
\end{eqnarray*}
The fourth line follows from \eref{eq:ldkub}, and the last follows
from \eref{eq:thetadef5} (taking $\gamma=4$).  
Thus, subtracting the above right-hand side
from \eref{eq:fopt1} shows that
$\fopt \ge n_kL^2/(256qm)$.
This inequality is valid for all $k=1,\ldots,t$, 
so we may assume it is true
for the $k$ that maximizes $n_k$.  This value of $n_k$ is therefore
at least $n/t$. This establishes \eref{eq:fopt}.
\end{proof}

For a particular topic
$k$, say that a term index $i\in T_k$ is {\em heavy} if
$P(i,k)\ge \chi$, where $\chi$ was defined by
\eref{eq:chidef1}--\eref{eq:chidef2} above.
Let $H_k\subset T_k$ be the heavy indices.
We use the notation $\top(j)$ to denote the topic of document $j$,
$j\in\{1,\ldots,n\}$.
Say that an entry $A(i,j)$ of $A(M,N)$ is a {\em heavy} entry
if $i\in H_k$ where $k=\top(j)$.
Finally, say that an entry $A(i,j)$ of $A(M,N)$
is {\em acceptable and heavy} if it is heavy and $j$ is
acceptable.

\begin{lemma}
Under Assumptions {\rm A1--A5},
the sum of squares of entries of $A$ that are not heavy but
are acceptable is at most $nL^2/(16\cdot 256\cdot 25^2 qtm).$
\label{lem:heavynotacc}
\end{lemma}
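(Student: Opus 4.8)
The plan is to bound each relevant entry of $A$ pointwise and then sum, using the global length estimate \eref{eq:lub}. Let $S$ denote the set of index pairs $(i,j)$ such that $j\in D_{\rm acc}$ and $i$ is not heavy for the topic $k=\top(j)$; the lemma asks us to bound $\sum_{(i,j)\in S}A(i,j)^2$. Fix such a pair, with $l_j\ge L^{1/2}$ the length of document $j$. First I would observe that $P(i,k)\le\chi$: if $i\in T_k$ this is immediate from the definition of the heavy set $H_k$ (``not heavy'' means $P(i,k)<\chi$), and if $i\notin T_k$ then $\eps$-separability gives $P(i,k)\le\eps\le\chi$ by \eref{eq:epsdef}. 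Since $j$ is acceptable, $|A(i,j)-l_jP(i,k)|\le l_j\theta$, so $A(i,j)\le l_j(\chi+\theta)$ and hence $A(i,j)^2\le l_j^2(\chi+\theta)^2$.

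The next step is to sum this bound, noting that for each fixed column $j$ there are at most $m$ rows $i$ with $(i,j)\in S$:
\[
\sum_{(i,j)\in S}A(i,j)^2\ \le\ (\chi+\theta)^2\!\!\sum_{(i,j)\in S}\!\! l_j^2\ \le\ m(\chi+\theta)^2\!\!\sum_{j\in D_{\rm acc}}\!\! l_j^2\ \le\ m(\chi+\theta)^2\Vert\l\Vert^2\ \le\ \frac{8m(\chi+\theta)^2nL^2}{3q},
\]
where the last inequality is \eref{eq:lub}. It then remains to verify that $8m(\chi+\theta)^2/3\le 1/(16\cdot256\cdot25^2tm)$, i.e.\ that $(\chi+\theta)^2\le 3/(8\cdot16\cdot256\cdot25^2tm^2)$.

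For this I would use two elementary facts about the parameter definitions. First, $\theta\le\theta_3=\chi\sqrt{1/(6\cdot278\cdot mt)}\le\chi$ since $mt\ge 1$, so $(\chi+\theta)^2\le 4\chi^2$. Second, $\chi\le\chi_1=\sqrt{3/(32\cdot16\cdot25^2\cdot256t)}/m$, whence $4\chi^2\le 4\chi_1^2=3/(8\cdot16\cdot25^2\cdot256tm^2)$, which is exactly the required bound. Substituting back into the displayed chain gives $nL^2/(16\cdot256\cdot25^2qtm)$, as claimed.

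I do not expect a genuine difficulty here: the argument is a crude termwise estimate, a summation, and a comparison of the constants defining $\chi$ and $\theta$. The only points needing care are bookkeeping ones --- remembering that a non-heavy row $i$ need not lie in $T_k$ at all, so that half of the $P(i,k)\le\chi$ estimate rests on $\eps$-separability rather than on the heaviness threshold, and using the global estimate \eref{eq:lub} rather than the per-topic estimate \eref{eq:ldkub}, since the acceptable columns being summed over range over all $t$ topics.
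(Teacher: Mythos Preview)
Your proposal is correct and follows essentially the same route as the paper: a pointwise bound $A(i,j)\le l_j(\chi+\theta)$ from acceptability, summation over at most $m$ rows per column to reach $m(\chi+\theta)^2\Vert\l\Vert^2$, the length estimate \eref{eq:lub}, then $\theta\le\chi$ and $\chi\le\chi_1$ to finish. The only difference is cosmetic---the paper organizes the outer sum by topic as $\sum_{k}\sum_{j\in D_k'}\sum_{i\notin H_k}$ and simply asserts ``$P(i,k)<\chi$ if $i$ is not heavy,'' whereas you make the case split $i\in T_k$ versus $i\notin T_k$ explicit; your version is slightly more careful on this point but otherwise identical.
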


\begin{proof}
This is a straightforward estimate:
\begin{eqnarray}
\sum_{\mbox{\scriptsize $A(i,j)$ not heavy \& $j$ acceptable}} A(i,j)^2
&=&\sum_{k=1}^t\sum_{j\in D_k'}\sum_{i\notin H_k} A(i,j)^2 
\nonumber\\
&\le& \sum_{k=1}^t\sum_{j\in D_k'}\sum_{i\notin H_k} l_j^2(P(i,k)+\theta)^2 
\nonumber\\
&\le&\sum_{k=1}^t\sum_{j\in D_k'}\sum_{i\notin H_k} l_j^2(\chi+\theta)^2 
\nonumber\\
&=& (\chi+\theta)^2\sum_{k=1}^t(m-|H_k|)\sum_{j\in D_k'} l_j^2 
\nonumber\\
&\le & m(\chi+\theta)^2\Vert\l\Vert^2 
\nonumber\\
&\le & 8m(\chi+\theta)^2nL^2/(3q) 
\nonumber\\
&\le & 32m\chi^2nL^2/(3q) 
\label{eq:anotheavyacc0} \\
&\le & nL^2/(16\cdot256\cdot 25^2qtm).
\label{eq:anotheavyacc}
\end{eqnarray}
The second line follows by definition of `acceptable.'  The third
follows because $P(i,k)<\chi$ if $i$ is not heavy in topic $k$.
The sixth line follows from \eref{eq:lub}, the seventh because
$\theta\le \chi$ (refer to \eref{eq:thetadef6}) and the last
from \eref{eq:chidef1}.
\end{proof}

\begin{lemma}
Under Assumptions {\rm A1--A5},
The sum of squares of acceptable and
heavy entries in $A(M,N)$, where $M,N$ are the
optimizers of \eref{eq:objfunc}, is at least
$nL^2/(512qt)$.
\label{lem:heavyacc}
\end{lemma}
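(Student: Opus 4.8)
The idea is to combine the lower bound on $\fopt$ from Lemma~\ref{lem:foptlb} with upper bounds on all the \emph{other} contributions to the optimal $A(M,N)$, and conclude that the acceptable-and-heavy entries are what is left over. Specifically, partition the entries $(i,j)\in M\times N$ of the optimal submatrix into three groups: (i) entries with $j$ unacceptable; (ii) entries with $j$ acceptable but $A(i,j)$ not heavy; and (iii) acceptable and heavy entries. I would bound the sum of squares over group (i) by $\Vert A(:,D_{\rm unacc})\Vert_F^2 \le nL^2/(16\cdot256\cdot25^2 qtm)$ using \eref{eq:adunacc}, and the sum of squares over group (ii) by $nL^2/(16\cdot256\cdot25^2 qtm)$ using Lemma~\ref{lem:heavynotacc}. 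Since $\Vert A(M,N)\Vert_F^2$ is at most the first term of the objective, which in turn is at least $\fopt\ge nL^2/(256qtm)$, subtracting the two bounds on groups (i) and (ii) leaves group (iii) with sum of squares at least
$$\frac{nL^2}{256qtm} - \frac{2nL^2}{16\cdot256\cdot25^2 qtm} = \frac{nL^2}{256qtm}\left(1 - \frac{2}{16\cdot25^2}\right),$$
which is comfortably more than $\frac{nL^2}{512qt}\cdot\frac{1}{m}$; actually one wants to be a little careful about the stray factor of $m$. Let me redo this: the target is $nL^2/(512qt)$, while the lower bound on $\fopt$ is $nL^2/(256qtm)$, which is \emph{smaller} by a factor $m$, so a direct subtraction does not immediately give the claim.

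**Resolving the factor of $m$.** This is the step I expect to be the main obstacle, and it must be where Lemma~\ref{lem:foptlb} is used more carefully. Re-examining the proof of Lemma~\ref{lem:foptlb}: the bound $\fopt\ge n_kL^2/(256qm)$ was obtained for the $k$ maximizing $n_k$, hence $n_k\ge n/t$. But the \emph{first term} $\Vert A(T_k,D_k')\Vert_F^2$ was actually bounded below by $n_kL^2/(128qm)$ in \eref{eq:fopt1}, and more to the point, the first term restricted to heavy rows $H_k$ is what matters. So rather than quoting $\fopt$ as a black box, I would re-run the estimate \eref{eq:fopt1} with $T_k$ replaced by $H_k$: the key arithmetic fact is that $\Vert P(H_k,k)\Vert^2$ is bounded below by a constant (not $1/m$), because the non-heavy coordinates of $P(:,k)$ each have mass $<\chi$ and there are at most $m$ of them, contributing at most $m\chi^2$; by the definition \eref{eq:chidef1} of $\chi$, $m\chi^2$ is tiny, so $\Vert P(H_k,k)\Vert^2\ge \Vert P(T_k,k)\Vert^2 - m\chi^2\ge 1/(2m) - m\chi^2$. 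Hmm, that still only gives $\Omega(1/m)$. The genuine source of the extra $m$ must instead be that the acceptable-heavy entries form a \emph{large portion of the optimal solution}, not just of the choice $M=T_k,N=D_k'$; i.e., I should not restrict to that particular choice but argue about the true optimizer.

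**The clean argument.** Let $(M,N)$ be the true optimizers and write $S$ for the sum of squares of acceptable-heavy entries in $A(M,N)$. The objective value at the optimum is $f(M,N) = \sigma_1^2 - (\gamma-1)(\sigma_2^2+\cdots)\le \sigma_1^2 \le \Vert A(M,N)\Vert_F^2$. Decompose $\Vert A(M,N)\Vert_F^2 = S + (\text{group (i)}) + (\text{group (ii)})$. Groups (i) and (ii) are bounded, \emph{over all of $A$}, by $nL^2/(16\cdot256\cdot25^2 qtm)$ each, by \eref{eq:adunacc} and Lemma~\ref{lem:heavynotacc}. Hence $S \ge f(M,N) - 2nL^2/(16\cdot256\cdot25^2 qtm) \ge \fopt - 2nL^2/(16\cdot256\cdot25^2 qtm)$. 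Now using $\fopt\ge nL^2/(256qtm)$ from Lemma~\ref{lem:foptlb}, we get $S\ge nL^2/(256qtm) - 2nL^2/(16\cdot256\cdot25^2 qtm) = \frac{nL^2}{256qtm}(1 - 1/(8\cdot25^2))$, which still has an $m$ in the denominator and so falls short of $nL^2/(512qt)$ by a factor $\approx 2m$. The only way to close this is that \emph{the lower bound on $\fopt$ in Lemma~\ref{lem:foptlb} is itself genuinely $\Omega(nL^2/(qt))$}, not $\Omega(nL^2/(qtm))$ — so I would revisit that lemma and sharpen \eref{eq:fopt1}, noticing that $\Vert P(T_k,k)\Vert^2$ need not be as small as $1/(2m)$ for the \emph{maximum-$n_k$} topic, OR that the bound $\sum_{i\in T_k}P(i,k)^2\ge 1/(2m)$ should be combined with the observation that $\Vert P(T_k,k)\Vert_1 = 1 - O(m\eps)\ge 1/2$, which only reproves $\Omega(1/m)$... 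So the honest summary: \textbf{the main obstacle is reconciling the $1/m$ in $\fopt$ with the $1/t$ target}, and I expect the authors resolve it by running the Lemma~\ref{lem:foptlb}-style estimate directly on the optimizer's heavy part while also subtracting a \emph{second-singular-value} contribution bounded via $(\gamma-1)(\sigma_2^2+\cdots)$ together with the near-rank-one structure of the heavy block — i.e., one shows $\sigma_2^2+\cdots+\sigma_p^2$ is small for $A(M,N)$ because removing the heavy-block rank-one piece leaves only group (i)$\,\cup\,$group (ii)$\,\cup\,$(deviation of heavy entries from their multinomial mean), all of which are $O(\phi)$ or $O(\theta^2 m)$ small, and then $S\ge \sigma_1^2 \ge f(M,N) \ge \fopt$ with the right constant. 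Once the bookkeeping constant is pinned down, \eref{eq:anotheavyacc} and \eref{eq:adunacc} do the rest, and the stated bound $nL^2/(512qt)$ follows.
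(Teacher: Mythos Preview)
Your basic approach---partition the entries of the optimal $A(M,N)$ into (i) unacceptable columns, (ii) acceptable-but-not-heavy, and (iii) acceptable-and-heavy, bound (i) by \eref{eq:adunacc} and (ii) by Lemma~\ref{lem:heavynotacc}, and then subtract from $\Vert A(M,N)\Vert_F^2\ge f(M,N)=\fopt$---is exactly what the paper does. The paper's proof is three sentences long and does precisely this.

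You are also correct that this argument only yields
\[
S \;\ge\; \fopt - \frac{2nL^2}{16\cdot 256\cdot 25^2\, qtm}
\;\ge\; \frac{nL^2}{256qtm} - \frac{nL^2}{512qtm}
\;=\; \frac{nL^2}{512qtm},
\]
with an $m$ in the denominator. The stated bound $nL^2/(512qt)$ is a typo in the lemma (the same slip recurs once in the proof of Lemma~\ref{lem:approxsingvec}, where ``$\Vert A(M,N)\Vert^2\ge nL^2/(256qt)$'' is cited from \eref{eq:fopt}, which actually gives $nL^2/(256qtm)$). The downstream numerics, e.g.\ \eref{eq:r0lb}, are in fact computed with the $m$ present, so the missing $m$ in the lemma statement is cosmetic and does not affect the rest of the argument.

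So: your ``clean argument'' paragraph is the whole proof, and the long detour trying to recover the extra factor of $m$ is chasing a typo rather than a gap in your reasoning. There is no hidden use of the second-singular-value penalty or a sharpening of Lemma~\ref{lem:foptlb}; those ideas are not needed here.
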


\begin{proof}
The sum of squares of entries in $A(M,N)$ from unacceptable
documents is bounded above by the sum of squares of entries in $A$ of
unacceptable documents, for which we have the estimate given by
\eref{eq:adunacc}.  The sum of squares
of entries of $A(M,N)$ which are acceptable
but not heavy is bounded above by the same quantity for all of $A$, which
is given by \eref{eq:anotheavyacc}.  Adding these two upper bounds
gives a quantity less than half of the lower bound in \eref{eq:fopt},
which proves the result.
\end{proof}

The following lemma is stated more generally than the others
 of this section 
(i.e., without Assumptions A1--A5 and without assuming
$\gamma=4$) because
it is more broadly applicable.  

\begin{lemma}
Let $A$ be an $m\times n$ matrix with nonnegative entries.
Let $M\subset\{1,\ldots,m\}$ be the optimizing choice of
$M$ for \eref{eq:objfunc}.  Assume
$\gamma>2$.
Let $j,j'$ index two columns of $A$ such that 
\begin{equation}
\frac{A(M,j)^TA(M,j')}{\Vert A(M,j)\Vert\cdot \Vert  A(M,j') \Vert}
< 1-2/\gamma.
\label{eq:bigangle}
\end{equation}
Then at least one of $j$ or $j'$ is not a member of the optimizing
choice of $N$.
\label{lem:mutualexclus}
\end{lemma}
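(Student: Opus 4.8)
The plan is to argue by contradiction. Suppose both $j$ and $j'$ lie in the optimizing set $N$; I will derive $A(M,j)^TA(M,j')/(\Vert A(M,j)\Vert\,\Vert A(M,j')\Vert)\ge 1-2/\gamma$, contradicting \eref{eq:bigangle}. The tool is the membership criterion of Lemma~\ref{lem:separable} (Remark~2): when $(\u,M)$ is the optimizing choice, the optimizing $N$ equals $\{j:f_j\ge 0\}$, where $f_j=-A(M,j)^TA(M,j)+\bar\gamma(A(M,j)^T\u(M))^2$ with $\bar\gamma$ as in \eref{eq:bargammadef}. Thus for each $j\in N$,
$$\frac{(A(M,j)^T\u(M))^2}{\Vert A(M,j)\Vert^2}\ge\frac{1}{\bar\gamma}=1-\frac{1}{\gamma}.$$
For the fixed optimal $M,N$ the optimal $\u,\sigma,\v$ furnish the best rank-one approximation of $A(M,N)$, so $\u(M)$ is a dominant left singular vector of the nonnegative matrix $A(M,N)$; by the Perron--Frobenius theorem we may take $\u(M)\ge\bz$. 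Since also $A(M,j)\ge\bz$, the number $A(M,j)^T\u(M)$ is nonnegative, and taking square roots yields $A(M,j)^T\u(M)\ge\Vert A(M,j)\Vert\sqrt{1-1/\gamma}$ for every $j\in N$.

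Then I would normalize: set $p=A(M,j)/\Vert A(M,j)\Vert$ and $p'=A(M,j')/\Vert A(M,j')\Vert$, unit vectors which — if $j,j'\in N$ — each have inner product at least $\sqrt{1-1/\gamma}$ with the unit vector $\u(M)$. Splitting off the component along $\u(M)$, write $p=(p^T\u(M))\u(M)+p_\perp$ and likewise $p'=(p'^T\u(M))\u(M)+p'_\perp$, with $p_\perp$ and $p'_\perp$ orthogonal to $\u(M)$; since $\Vert p\Vert=\Vert p'\Vert=1$, this gives $\Vert p_\perp\Vert^2=1-(p^T\u(M))^2\le 1/\gamma$ and $\Vert p'_\perp\Vert^2\le 1/\gamma$. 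The key step is then a one-line Cauchy--Schwarz estimate,
$$p^Tp'=(p^T\u(M))(p'^T\u(M))+p_\perp^Tp'_\perp\ge\left(1-\frac{1}{\gamma}\right)-\Vert p_\perp\Vert\,\Vert p'_\perp\Vert\ge\left(1-\frac{1}{\gamma}\right)-\frac{1}{\gamma}=1-\frac{2}{\gamma},$$
and since $p^Tp'$ is exactly the ratio appearing in \eref{eq:bigangle}, this contradicts the hypothesis; hence $j$ and $j'$ cannot both belong to $N$.

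I expect the only real subtlety to be the reduction to a nonnegative $\u(M)$: this is precisely what keeps $A(M,j)^T\u(M)$ and $A(M,j')^T\u(M)$ from having opposite signs (in which case $p^Tp'$ could be close to $-1$ and the lemma would fail), so the nonnegativity of $A$, through Perron--Frobenius, is used essentially rather than cosmetically. The rest is bookkeeping: if $A(M,j)=\bz$ the ratio in \eref{eq:bigangle} is undefined and $j$ falls outside the hypothesis, and the boundary case $f_j=0$ merely turns the inequalities above into equalities, still contradicting the strict inequality of \eref{eq:bigangle}. Finally, the assumption $\gamma>2$ is what makes the statement non-vacuous — for $\gamma\le 2$ the right-hand side $1-2/\gamma$ is nonpositive and unattainable by nonnegative columns — but the estimate itself goes through for every $\gamma>1$.
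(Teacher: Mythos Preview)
Your argument is correct, but it follows a genuinely different route from the paper's. The paper never invokes Perron--Frobenius or fixes the sign of $\u$; instead it works directly with the squares $(\r^T\u)^2$ and $(\s^T\u)^2$ (where $\r,\s$ are your $p,p'$), adds the two membership scores, and bounds the sum by the operator norm of the $2\times |M|$ matrix with rows $\r^T,\s^T$. This reduces to the largest eigenvalue of the $2\times 2$ Gram matrix $\bigl(\begin{smallmatrix}1&\r^T\s\\ \r^T\s&1\end{smallmatrix}\bigr)$, namely $1+\r^T\s$ (here nonnegativity of $A$ enters only to ensure $\r^T\s\ge 0$), yielding $f_j+f_{j'}\le(\r^T\s-1)\gamma+2<0$ for \emph{every} unit vector $\u$. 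Your proof, by contrast, first pins down the sign of $\u(M)$ via Perron--Frobenius so that you can pass from $(p^T\u)^2\ge 1-1/\gamma$ to $p^T\u\ge\sqrt{1-1/\gamma}$, and then runs a clean orthogonal-decomposition and Cauchy--Schwarz estimate. Both arrive at the same threshold $1-2/\gamma$. The paper's approach is a touch more economical---it needs only $\r^T\s\ge 0$, which follows directly from $A\ge 0$ without appealing to the structure of the optimal $\u$---so your remark that Perron--Frobenius is used ``essentially rather than cosmetically'' is true of your proof but not of the lemma itself; the paper shows one can dispense with it. On the other hand, your decomposition makes the geometry (two unit vectors close to a common axis must be close to each other) more visible.
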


\noindent{\bf Remark.} The lemma is also true when the roles of $M$
and $N$ are reversed since the value of the objective function
\eref{eq:objfunc2} is unchanged under matrix transposition.

\begin{proof}
Let unit vector
$\u$ be the optimizing choice for \eref{eq:objfunc}.  As noted
in Lemma~\ref{lem:separable}, $j$ and $j'$ are included in the
optimal $N$ provided $f_j,f_{j'}>0$, where
\begin{eqnarray*}
f_j &=& \gamma (A(M,j)^T\u))^2-(\gamma-1)\Vert A(M,j)\Vert^2, \\
f_{j'}&=&\gamma (A(M,j')^T\u))^2-(\gamma-1)\Vert A(M,j')\Vert^2.
\end{eqnarray*}
Here, we have simplified notation by allowing $\u$ to stand for $\u(M)$.
We will now show that for any possible choice of $\u$, either
$f_j<0$ or $f_{j'}<0$, meaning that at least
one of  $j$ or $j'$ cannot be in $N$.

To proceed, let us 
define normalizations
$\r=A(M,j)/\Vert A(M,j)\Vert$ and $\s=A(M,j')/\Vert A(M,j')\Vert$.
With these definition, \eref{eq:bigangle} is rewritten
$\r^T\s< 1-2/\gamma$.
Since multiplying by a positive scalar does not affect the signs of $f_j$
or $f_{j'}$, it suffices to redefine them using the normalized
vectors:
$$f_j = \gamma(\r^T\u)^2-\gamma+1$$
and
$$f_{j'} = \gamma(\s^T\u)^2-\gamma+1.$$
Thus,
\begin{eqnarray}
f_j+f_{j'}&= & \gamma(\r^T\u)^2+\gamma(\s^T\u)^2-2\gamma+2 \\
&=&
\gamma\left\Vert\left(
\begin{array}{c}
\r^T \\
\s^T
\end{array}
\right)
\u
\right\Vert^2 -2\gamma+2  \nonumber \\
&\le &
\gamma\left\Vert\left(
\begin{array}{c}
\r^T \\
\s^T
\end{array}
\right)
\right\Vert_2^2 -2\gamma+2 \nonumber \\
&=&
\gamma
\lambda_{\max}\left(
\begin{array}{cc}
1 & \r^T\s \\
\r^T\s & 1
\end{array}
\right)-2\gamma + 2. \label{eq:rts}
\end{eqnarray}
In this inequality we used the notation $\lambda_{\max}$ to denote
the maximum eigenvalue of a symmetric matrix.  We also used the identity
that for any matrix $B$, $\Vert B\Vert_2=(\lambda_{\max}(BB^T))^{1/2}$.
The eigenvalues of the $2\times 2$ matrix above can be easily
determined as  $1\pm \r^T\s$.   Thus, 
$$f_j+f_{j'}\le (1+\r^T\s)\gamma-2\gamma+2=(\r^T\s-1)\gamma+2.$$
Since $\r^T\s<1-2/\gamma$, the right-hand side is negative, thus
showing that either $f_j$ or $f_{j'}$ is negative.
\end{proof}

We can now apply the previous lemma to the text corpus under
analysis.

\begin{lemma}
Assume {\rm A1--A5} hold.
Suppose that $(i,j),(i',j')$ are two acceptable heavy entries
in the optimal solution $(M,N)$.  Then $(i,j)$ and $(i',j')$
must be from the same topic $k$.
\label{lem:uniqueheavy}
\end{lemma}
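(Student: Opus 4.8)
The plan is to reduce to Lemma~\ref{lem:mutualexclus} by showing that two acceptable heavy entries from different topics give rise to a pair of columns (or rows) that are nearly orthogonal when restricted to $M$, so nearly orthogonal that \eref{eq:bigangle} is satisfied with $\gamma=4$ (i.e.\ the normalized inner product is less than $1/2$). Suppose for contradiction that $(i,j)$ has topic $k=\top(j)$ and $(i',j')$ has topic $k'=\top(j')$ with $k\ne k'$, and that all four of $i\in M$, $i'\in M$, $j\in N$, $j'\in N$. The key structural fact is $\eps$-separability: the columns $A(:,j)$ and $A(:,j')$ have most of their mass on the disjoint term sets $T_k$ and $T_{k'}$ respectively, with only an $O(\eps)$ plus $O(\theta)$ leak elsewhere (using that $j,j'$ are acceptable, so each entry is within $\theta l_j$ of its mean $l_jP(i,\cdot)$).

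First I would estimate $\Vert A(:,j)\Vert$ from below: since $i\in H_k$ is heavy, $A(i,j)\ge l_j(P(i,k)-\theta)\ge l_j\chi/2$ (using $\theta\le P(i,k)/2$ which holds because $i\in H_k\subset T_k$ and $\theta\le\theta_1=P_{\min}/2$), and $l_j\ge L^{1/2}$ by acceptability; more to the point, $\Vert A(M,j)\Vert\ge A(i,j)\ge l_j\chi/2$, so the restricted column is not small. Second I would bound $A(M,j)^TA(M,j')$ from above: this inner product is $\sum_{p\in M}A(p,j)A(p,j')$. For each row $p$, at most one of the means $l_jP(p,k)$, $l_{j'}P(p,k')$ can exceed the separability threshold (since $T_k\cap T_{k'}=\emptyset$), so in every product $A(p,j)A(p,j')$ at least one factor is bounded by $l\cdot(\eps+\theta)$ (a ``small'' factor) while the other is trivially bounded by $l$ (its $\ell_1$ bound, via \eref{eq:ajltlj} applied entrywise). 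Summing and using Cauchy--Schwarz together with $\Vert A(:,j)\Vert\le l_j$, $\Vert A(:,j')\Vert\le l_{j'}$ from \eref{eq:ajltlj}, I get $A(M,j)^TA(M,j') \le C(\eps+\theta)\, l_j l_{j'}$ for an explicit constant $C$ (something like $2$, by splitting $M$ according to which factor is small and applying Cauchy--Schwarz on each piece).

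Combining the two estimates, the normalized inner product in \eref{eq:bigangle} is at most
$$
\frac{C(\eps+\theta)l_jl_{j'}}{(l_j\chi/2)(l_{j'}\chi/2)} = \frac{4C(\eps+\theta)}{\chi^2}.
$$
Now $\eps\le\chi$ and $\theta\le\chi$ by \eref{eq:epsdef} and \eref{eq:thetadef6}, so the right-hand side is at most $8C/\chi$, and since $\chi$ is a tiny quantity ($\chi_1,\chi_2$ are $O(1/(m\sqrt t))$), for the stated parameter choices this is far below $1-2/\gamma = 1/2$. Hence \eref{eq:bigangle} holds, and Lemma~\ref{lem:mutualexclus} (with the transposition remark, since here we are excluding columns) says that $j$ and $j'$ cannot both lie in the optimal $N$ --- contradiction. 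Therefore all acceptable heavy entries of the optimal $(M,N)$ come from a single topic.

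The main obstacle I expect is getting the inner-product bound $A(M,j)^TA(M,j')\le C(\eps+\theta)l_jl_{j'}$ clean: one has to handle the cross terms where \emph{both} factors come from off-topic leakage (these are doubly small, hence negligible) versus terms where one factor is on-topic and large, and be careful that restricting to rows in $M$ only helps (it removes rows, never adds mass). A secondary subtlety is that I should double-check whether the relevant application of Lemma~\ref{lem:mutualexclus} is to columns $j,j'$ or rows $i,i'$; the cleanest route is columns when $\top(j)\ne\top(j')$, but if the two heavy entries happen to share a column ($j=j'$) then $\top(j)=\top(j')$ automatically and there is nothing to prove, and likewise if they share a row then $i=i'\in H_k\cap H_{k'}$ forces $k=k'$ since the $T_k$ are disjoint. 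So the only real case is $j\ne j'$ with distinct topics, which is exactly what the argument above handles.
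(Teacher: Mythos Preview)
Your overall strategy --- reduce to Lemma~\ref{lem:mutualexclus} by showing that two acceptable columns on different topics have normalized inner product below $1-2/\gamma=1/2$ --- is exactly the paper's strategy. The structural observations (near-disjoint support on $T_k$ versus $T_{k'}$, the lower bound $\Vert A(M,j)\Vert\ge A(i,j)\ge l_j\chi/2$ from the single heavy entry, the off-topic entries bounded by $l(\eps+\theta)$) are also the right ingredients.

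However, the numerical conclusion is wrong, and the error is not a technicality. You arrive at the bound
\[
\frac{A(M,j)^TA(M,j')}{\Vert A(M,j)\Vert\,\Vert A(M,j')\Vert}\le\frac{4C(\eps+\theta)}{\chi^2}\le\frac{8C}{\chi},
\]
and then assert that this is ``far below $1/2$'' because $\chi$ is tiny. That is backwards: if $\chi$ is tiny then $8C/\chi$ is enormous. With the paper's parameters, $\chi$ is of order $1/(m\sqrt t)$ while $\eps+\theta$ is of order $\chi/\sqrt{mt}$ (from \eref{eq:thetadef6} and \eref{eq:epsdef}), so $(\eps+\theta)/\chi^2$ is of order $\sqrt{mt}/\chi\sim m^{3/2}t$, which is large. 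Your bound therefore gives no information.

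The source of the loss is that you bound numerator and denominator separately, using the trivial estimate $\Vert A(:,j)\Vert\le l_j$ in the numerator but only the single-entry estimate $\Vert A(M,j)\Vert\ge l_j\chi/2$ in the denominator; this mismatch produces an extra factor $1/\chi$. The paper avoids this by working directly with the \emph{normalized} columns $\r=A(M,j)/\Vert A(M,j)\Vert$ and $\s=A(M,j')/\Vert A(M,j')\Vert$ and splitting them along $M\cap T_k$, $M\cap T_{k'}$, and the remainder. The key estimate is then
\[
\Vert \r(M-T_k)\Vert\le \frac{\Vert A(M-T_k,j)\Vert}{\Vert A(M\cap T_k,j)\Vert}\le\frac{\sqrt m\,l_j(\eps+\theta)}{l_j\chi/2}=\frac{2\sqrt m(\eps+\theta)}{\chi}\le\frac15,
\]
using $\chi\ge 10\sqrt m(\eps+\theta)$ from \eref{eq:thetadef6} and \eref{eq:epsdef}; similarly $\Vert\s(M-T_{k'})\Vert\le 1/5$. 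Since $\r$ lives mostly on $T_k$ and $\s$ mostly on $T_{k'}$, the inner product decomposes into three pieces each controlled by one of these small norms, giving $|\r^T\s|\le 1/5+1/5+1/25=0.44<1/2$. The crucial point is that this route yields a bound of order $\sqrt m(\eps+\theta)/\chi$, which \emph{is} small, rather than $(\eps+\theta)/\chi^2$, which is not.
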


\begin{proof}
Suppose
that $(i,j)$ is an acceptable heavy entry on topic
$k$, and $(i',j')$ is an acceptable heavy  entry
topic $k'$ such that $k'\ne k$.  Suppose also that both $i,i'\in M$.
We will prove that either $j$ or $j'$ is not in $N$.
Let $\r=A(M,j)/\Vert A(M,j)\Vert$ and $\s=A(M,j')/\Vert A(M,j')\Vert$.
Let us split $\r$ and $\s$ into three subvectors: $\r_1,\s_1$
contain those entries indexed
$M\cap T_k$; $\r_2,\s_2$ contain
entries indexed by  $M\cap T_{k'}$; and $\r_3$, $\s_3$
contain the remaining
entries of $M$.  
Since $j\in T_k$, $(i,j)$ is heavy,
and $j$ is acceptable, this means that 
$i\in M\cap T_k$ and
$A(i,j)\ge l_j(\chi-\theta)$, so 
that $\Vert A(M\cap T_k,j)\Vert^2\ge l_j^2(\chi-\theta)^2.$
Since $\theta\le\chi/2$ (refer to \eref{eq:thetadef6}),
this quantity is at
least $l_j^2\chi^2/4$.  On the other hand, $\Vert A(M-T_k,j)\Vert^2
\le ml_j^2(\eps+\theta)^2$ since column $j$ is acceptable
and $P(i,k)\le\eps$ for $i\notin T_k$.  
Thus, after rescaling, 
\begin{eqnarray*}
\Vert[\r_2;\r_3]\Vert & = & 
\Vert\r(M-T_k)\Vert \\
&=& \Vert A(M-T_k,j)\Vert/\Vert A(M,j)\Vert \\
& \le & \Vert A(M-T_k,j)\Vert / \Vert A(M\cap T_k,j)\Vert \\
& \le & 2m^{1/2}(\eps+\theta)/\chi.
\end{eqnarray*}
The inequality $\chi\ge 10m^{1/2}(\eps+\theta)$
follows from   \eref{eq:thetadef6} and
the fact that $\eps\le \theta$ from \eref{eq:epsdef}.
Thus,
$\Vert [\r_2;\r_3]\Vert\le 1/5$.
Similarly, $\Vert[\s_1;\s_3]\Vert\le 1/5$.  Hence,
\begin{eqnarray*}
|\r^T\s| &\le& |\r_1^T\s_1|+|\r_2^T\s_2|+|\r_3^T\s_3| \\
&\le &\Vert \s_1\Vert + \Vert \r_2\Vert + \Vert\r_3\Vert\cdot\Vert \s_3\Vert \\
&\le & 1/5+1/5+1/25 =0.44.
\end{eqnarray*}
Thus, by Lemma~\ref{lem:mutualexclus}, since $0.44<1-2/\gamma$ when $\gamma=4$,
either $j$ or $j'$ is not present in the optimal $N$.
\end{proof}

The next lemma shows that the left and right singular vectors of the
optimal solution to \eref{eq:objfunc} are determined largely by
the document lengths and probability distribution for topic $k$.

\begin{lemma}
Let $(M,N)$ be the index sets that optimize \eref{eq:objfunc}.
Let $k$ be the index of the topic of the heavy entry occurring
in $(M,N)$ (which is uniquely determined according to 
Lemma~$\ref{lem:uniqueheavy}$).  Let $\r$ and $\s$ be the right and 
left singular vectors of
$A(M,N)$, respectively.  Assume {\rm A1--A5} hold.
Then there exists a positive scalar $\kappa_r$ such that
\begin{equation}
\frac{\Vert \kappa_r\r-\r_0\Vert}{\Vert\r_0\Vert}\le 1/6,
\label{eq:kappar}
\end{equation}
where $\r_0$ is defined by
\begin{equation}
r_0(j)=\left\{
\begin{array}{ll}
l_j & \mbox{for $j\in D_k\cap N$,} \\
0 & \mbox{else.}
\end{array}
\right.
\label{eq:r0def}
\end{equation}
Similarly, there exists a positive scalar $\kappa_s$ such that
\begin{equation}
\frac{\Vert \kappa_s\s-\s_0\Vert}{\Vert\s_0\Vert}\le 1/6,
\label{eq:kappas}
\end{equation}
where $\s_0=P(M,k)$.
In addition, $\r_0$ satisfies the following inequality:
\begin{equation}
\Vert\r_0\Vert^2 \ge nL^2/(278qtm)
\label{eq:r0lb}
\end{equation}
\label{lem:approxsingvec}
\end{lemma}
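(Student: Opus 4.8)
\noindent\emph{Proof plan.}
The plan is to show that the optimal $A(M,N)$ is, in Frobenius norm, extremely close to the rank-one matrix $B:=\s_0\r_0^T$ (viewed as having row index set $M$ and column index set $N$), where $\s_0=P(M,k)$ and $\r_0$ is the vector defined in \eref{eq:r0def}. Once $\Vert A(M,N)-B\Vert_F$ is shown to be small relative to $\sigma_1(A(M,N))$, elementary perturbation theory for rank-one matrices forces the leading right and left singular vectors $\r,\s$ of $A(M,N)$ to be close to $\r_0/\Vert\r_0\Vert$ and $\s_0/\Vert\s_0\Vert$, which is exactly \eref{eq:kappar} and \eref{eq:kappas}; the remaining inequality \eref{eq:r0lb} drops out of Lemma~\ref{lem:heavyacc}.

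I would first prove \eref{eq:r0lb} and record that $\s_0\ne\bz$ and $\r_0\ne\bz$. By Lemma~\ref{lem:uniqueheavy} every acceptable heavy entry of $A(M,N)$ lies in a column $j\in D_k'\cap N$, and at least one such entry exists by Lemma~\ref{lem:heavyacc}; in particular $M$ meets $H_k$, so $\Vert\s_0\Vert\ge\chi>0$. For any column $j$, nonnegativity of $A$ gives $\sum_i A(i,j)^2\le\bigl(\sum_i A(i,j)\bigr)^2=l_j^2$, hence the sum of squares of the acceptable heavy entries of $A(M,N)$ is at most $\sum_{j\in D_k'\cap N}l_j^2\le\Vert\r_0\Vert^2$. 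Since Lemma~\ref{lem:heavyacc} bounds that sum below by $nL^2/(512qt)$, and using $m\ge2$ (the case $m=1$ being degenerate), we get $\Vert\r_0\Vert^2\ge nL^2/(512qt)\ge nL^2/(278qtm)$.

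Next set $\delta:=\Vert A(M,N)-B\Vert_F$ and bound $\delta^2$ by splitting the columns $j\in N$ into three groups. (i) For $j\in D_k'\cap N$, acceptability gives $|A(i,j)-l_jP(i,k)|\le\theta l_j$ for all $i$, so the $j$-th column of $A(M,N)-B$ has squared norm at most $m\theta^2 l_j^2$; summing and using \eref{eq:ldkub} bounds this group by $8m\theta^2 nL^2/(3q)$. (ii) For $j\in(D_k\cap N)\setminus D_k'$, the $j$-th column of $A(M,N)-B$ equals $A(M,j)-l_jP(M,k)$, of squared norm at most $\Vert A(:,j)\Vert^2+l_j^2\le 2l_j^2$ by \eref{eq:aminusb} and $\Vert P(M,k)\Vert\le1$; summing and using $\Vert\l(D_k-D_k')\Vert^2\le 32\phi n_kL^2/(3q)$ and $\Vert A(:,D_k-D_k')\Vert_F^2\le 32\phi n_kL^2/(3q)$ bounds this group by $64\phi nL^2/(3q)$. (iii) For $j\in N\setminus D_k$ the $j$-th column of $B$ vanishes, so the contribution is $\sum_{i\in M}A(i,j)^2$; if $j$ is acceptable then each such entry is non-heavy (otherwise $(i,j)$ would be an acceptable heavy entry from a topic $\ne k$, contradicting Lemma~\ref{lem:uniqueheavy}), and if $j$ is unacceptable its column is part of $A(:,D_{\rm unacc})$, so by Lemma~\ref{lem:heavynotacc} and \eref{eq:adunacc} this group contributes at most $2nL^2/(16\cdot256\cdot25^2 qtm)$. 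Substituting $\theta\le\theta_3$, $\phi\le\phi_1$ and $\chi\le\chi_1$ (see \eref{eq:thetadef6}, \eref{eq:phidef2} and \eref{eq:chidef1}) makes each of the three groups negligible against $nL^2/(73728 qtm)$, so $\delta^2\le nL^2/(73728qtm)\le\fopt/288$, the last step by Lemma~\ref{lem:foptlb}.

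Finally I would run the perturbation step. From \eref{eq:objfunc2} with $\gamma=4$, $\fopt=\sigma_1^2-3(\sigma_2^2+\cdots+\sigma_p^2)$ where $\sigma_1\ge\cdots\ge\sigma_p\ge0$ are the singular values of $A(M,N)$, so $\sigma_1\ge\sqrt{\fopt}$. Because $B$ has rank at most $1$, the best rank-one Frobenius approximation error of $A(M,N)$ is at most $\delta$; picking a nonnegative pair of leading singular vectors $\s,\r$ (possible by Perron--Frobenius since $A(M,N)\ge0$) we get $\Vert A(M,N)-\sigma_1\s\r^T\Vert_F\le\delta$ and hence $\Vert\sigma_1\s\r^T-B\Vert_F\le2\delta$. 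Write $B=\Vert\s_0\Vert\,\Vert\r_0\Vert\,\hat{\s}_0\hat{\r}_0^T$ with $\hat{\r}_0=\r_0/\Vert\r_0\Vert$ and $\hat{\s}_0=\s_0/\Vert\s_0\Vert$; expanding $\Vert\sigma_1\s\r^T-B\Vert_F^2$ in an orthonormal basis of $\R^{|N|}$ whose first vector is $\hat{\r}_0$ produces the term $\sigma_1^2\bigl(1-(\r^T\hat{\r}_0)^2\bigr)$, which is therefore at most $4\delta^2$. Since $\r^T\hat{\r}_0\ge0$, this gives $\Vert\r-\hat{\r}_0\Vert^2=2-2\,\r^T\hat{\r}_0\le 8\delta^2/\sigma_1^2\le 8\delta^2/\fopt\le 1/36$, and the transposed computation gives $\Vert\s-\hat{\s}_0\Vert^2\le 1/36$. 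Taking $\kappa_r=\Vert\r_0\Vert$ and $\kappa_s=\Vert\s_0\Vert$, so that $\kappa_r\r-\r_0=\Vert\r_0\Vert(\r-\hat{\r}_0)$ and $\kappa_s\s-\s_0=\Vert\s_0\Vert(\s-\hat{\s}_0)$, yields \eref{eq:kappar} and \eref{eq:kappas}.

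The main obstacle is the bookkeeping for $\delta^2$ in steps (ii)--(iii): one must partition the entries of $A(M,N)-B$ so that the three estimates do not overlap, invoke Lemma~\ref{lem:uniqueheavy} precisely where the off-topic columns are handled, and then check that the (large but deliberately chosen) numerical constants hidden inside $\theta$, $\phi$ and $\chi$ genuinely push $\delta^2$ below $\fopt/288$. The perturbation step is routine once one observes that $B$ being rank one lets the best rank-one approximant of $A(M,N)$ inherit the bound $\delta$, which is what keeps the singular-vector error linear (rather than square-root) in $\delta/\sigma_1$.
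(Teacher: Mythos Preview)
Your proof is correct and follows the same overall strategy as the paper: define the rank-one matrix $B=P(M,k)\r_0^T$, bound $\Vert A(M,N)-B\Vert_F$ by partitioning the columns of $N$ into acceptable-on-topic, unacceptable-on-topic, and off-topic pieces, and then deduce closeness of the leading singular vectors from the smallness of the perturbation. The column partition you use (topic first, then acceptability) is a harmless reorganisation of the paper's partition (acceptability first, then topic); the resulting Frobenius bounds are the same up to regrouping, and the constants $\theta_3,\phi_1,\chi_1$ are indeed strong enough to make $\delta^2\le\fopt/288$.

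Two points where your route genuinely differs from the paper's and is arguably cleaner: (1) for the singular-vector perturbation the paper invokes Theorem~8.6.5 of Golub--Van~Loan, whereas you give a self-contained argument exploiting directly that $B$ has rank one (best rank-one approximation gives $\Vert A(M,N)-\sigma_1\s\r^T\Vert_F\le\delta$, triangle inequality, then the orthonormal-basis expansion isolating $\sigma_1^2(1-(\r^T\hat\r_0)^2)$). This avoids an external citation at no real cost. (2) For \eref{eq:r0lb} the paper derives $\Vert\r_0\Vert^2$ from $\Vert B\Vert_2^2/\Vert P(M,k)\Vert^2$ after establishing $\Vert B\Vert\ge(24/25)\Vert A(M,N)\Vert$; you instead bound it below by the sum of squares of acceptable heavy entries via Lemma~\ref{lem:heavyacc}, which gives the slightly stronger $\Vert\r_0\Vert^2\ge nL^2/(512qt)$ and then only needs $m\ge 2$ to reach \eref{eq:r0lb}. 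Both differences are legitimate and the paper's version is not shorter, so your variant is a reasonable alternative.
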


\begin{proof}
Let $B$ be a matrix
with the same dimensions as $A(M,N)$, indexed the same way
$A$ is indexed (i.e., $B$ and $B(M,N)$
denote the same matrix) and whose $(i,j)$ entry is $B(i,j)=P(i,k)r_0(j)$.
We will also use $\r_0$ and $\r_0(N)$
synonymously, and similarly for $\r$, $\s$ and $\s_0$.
Observe that $B$ is a rank-one
matrix since $B=P(M,k)\r_0^T$.
The only nonzero singular value of $B$
is $\Vert \r_0\Vert\cdot \Vert P(M,k)\Vert$.

Let us partition $N$ into three sets $N_1\cup N_2\cup N_3$ given by
\begin{equation}
N_1=N\cap D_{\rm unacc};\quad N_2=N\cap(D_{\rm acc}-D_k');\quad
N_3=N\cap D_k'.
\label{eq:Npart}
\end{equation}
From this partition,
\begin{eqnarray*}
\Vert B-A(M,N)\Vert_2^2 
& \le &
\Vert B-A(M,N)\Vert_F^2 
\\
& = &
\Vert B(M,N_1)-A(M,N_1)\Vert_F^2 
+\Vert B(M,N_2)-A(M,N_2)\Vert_F^2 
\\
& & \quad\mbox{}+
\Vert B(M,N_3)-A(M,N_3)\Vert_F^2
\end{eqnarray*}
where we now obtain upper bounds on
the three terms individually.
\begin{eqnarray*}
\Vert B(M,N_1)-A(M,N_1)\Vert_F^2 &\le &
\Vert B(M,N_1)\Vert_F^2 + \Vert A(M,N_1)\Vert_F^2 
\\
& = & \Vert P(M,k)\Vert^2\Vert \l(N_1\cap D_k)\Vert^2+
\sum_{j\in N_1}\sum_{i\in M} A(i,j)^2
\\
&\le & \Vert \l(D_{\rm unacc})\Vert^2 +\Vert A(:,D_{\rm unacc})\Vert_F^2
\\
&\le &
2nL^2/(256\cdot 3\cdot2\cdot25^2qtm).
\end{eqnarray*}
In the above derivation, we used \eref{eq:aminusb} for
the first line, the relationship $N_1\subset D_{\rm unacc}$
and $\Vert P(:,k)\Vert\le 1$ for the third line, and
\eref{eq:ldunacc} and \eref{eq:adunacc} for the last.

Next, observe that $B(M,N_2)=0$ since $N_2\cap D_k=\emptyset$, hence
$$\Vert B(M,N_2)-A(M,N_2)\Vert_F^2  = \Vert A(M,N_2)\Vert_F^2.$$
All entries of the right-hand side are acceptable
and not heavy; they are acceptable by choice of $N_2$, and they are
not heavy because Lemma~\ref{lem:uniqueheavy} shows that there
cannot be an acceptable heavy entry from a topic other than $k$
in the optimal solution.  Thus, from \eref{eq:anotheavyacc},
$$\Vert B(M,N_2)-A(M,N_2)\Vert_F^2 \le  nL^2/(16\cdot256\cdot 25^2qtm).$$

Finally, for $j\in N_3$, $r_0(j)=l_j$ since $N_3\subset D_k$.  Thus,
\begin{eqnarray*}
\Vert B(M,N_3)-A(M,N_3)\Vert_F^2
&=& 
\sum_{j\in N_3}\sum_{i\in M} (l_jP(i,k)-A(i,j))^2
 \\
&\le&
\sum_{j\in N_3}\sum_{i\in M} l_j^2\theta^2
\\
&=&
|M|\theta^2\Vert \l(N_3)\Vert^2
\\
&\le&
m\theta^2\Vert \l(D_k)\Vert^2
\\
&\le &
8m\theta^2n_kL^2/(3q)
\\
&\le &
nL^2/(3\cdot 25^2\cdot 256qtm)
\end{eqnarray*}
Here, the definition of `acceptable' was used for the second line, and
\eref{eq:ldkub} was used for the fifth line and \eref{eq:thetadef5}
for the last line.

Thus, we see that $\Vert B(M,N)-A(M,N)\Vert^2\le nL^2/(25^2\cdot 256qtm)$.
On the other hand, $\Vert A(M,N)\Vert^2\ge nL^2/(256 qt)$
by \eref{eq:fopt}.  Thus
$\Vert B(M,N)-A(M,N)\Vert/\Vert A(M,N)\Vert\le 1/25$.  This
means by the triangle inequality that 
$\Vert B(M,N)-A(M,N)\Vert/\Vert B(M,N)\Vert\le 1/24$. 

Now we 
can apply Theorem 8.6.5 of Golub and Van Loan \cite{GVL}
on the perturbation of singular vectors to conclude that 
the normalized left singular vector of $A(M,N)$ differs from
$\r_0/\Vert \r_0\Vert$ by at most 
$1/6$.  (Note that in applying the theorem, we use the fact
that the second singular value of $B(M,N)$ is zero since $B(M,N)$
has rank one.)
Similarly,
the normalized right singular vector of $A(M,N)$ differs
from $\s_0/\Vert \s_0\Vert$ by at most $1/6$.

Finally, to establish
\eref{eq:r0lb}, we observe that
\begin{eqnarray*}
\Vert \r_0\Vert^2 &=&\Vert B\Vert^2/\Vert P(M,k)\Vert^2 \\
&\ge & (24/25)^2nL^2/(256qtm).
\end{eqnarray*}
which implies \eref{eq:r0lb}.
The first line follows because $B$ is rank-one, and the second
because $\Vert B\Vert\ge (24/25)\Vert A(M,N)\Vert$ as established
above, and $\Vert P(:,k)\Vert\le1$ since $\Vert P(:,k)\Vert_1=1$.
\end{proof}

The following lemma is used to determine when adding a row or
column to the sets $M$ or $N$ will increase the objective
function \eref{eq:objfunc}.

\begin{lemma}
Let $\u$ be a nonzero vector and $\d_1,\d_2$ perturbations such
that $\Vert \d_1\Vert\le \Vert\u\Vert/6$ and 
$\Vert\d_2\Vert\le \Vert\u\Vert/6$.
Suppose $\a=\kappa_1(\u+\d_1)$ and $\b=\kappa_2(\u+\d_2)$,
where $\kappa_1,\kappa_2$ are positive scalars.
Then 
$$\a^T\a-\gamma \Vert \a-\beta\b\Vert^2>0$$
for $\gamma=4$
and for at least one choice of $\beta$.
\label{lem:incrobj}
\end{lemma}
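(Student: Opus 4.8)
The plan is to avoid the least-squares optimal $\beta$ (namely $\a^T\b/\Vert\b\Vert^2$, which would force us to estimate the angle between $\a$ and $\b$) and instead to use the explicit choice $\beta=\kappa_1/\kappa_2$, which is a positive scalar by hypothesis. With this choice $\beta\b=\kappa_1(\u+\d_2)$, so the dominant $\u$-parts of $\a$ and $\beta\b$ cancel exactly and $\a-\beta\b=\kappa_1(\d_1-\d_2)$. Hence the residual is controlled entirely by the two perturbation hypotheses.

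First I would bound the penalty term from above. By the triangle inequality and $\Vert\d_1\Vert,\Vert\d_2\Vert\le\Vert\u\Vert/6$,
\[
\Vert\a-\beta\b\Vert^2=\kappa_1^2\Vert\d_1-\d_2\Vert^2\le\kappa_1^2(\Vert\d_1\Vert+\Vert\d_2\Vert)^2\le\kappa_1^2\Vert\u\Vert^2/9.
\]
Next I would bound the first term from below, again by the triangle inequality:
\[
\a^T\a=\kappa_1^2\Vert\u+\d_1\Vert^2\ge\kappa_1^2(\Vert\u\Vert-\Vert\d_1\Vert)^2\ge 25\kappa_1^2\Vert\u\Vert^2/36.
\]
Combining the two estimates with $\gamma=4$ gives
\[
\a^T\a-\gamma\Vert\a-\beta\b\Vert^2\ge\kappa_1^2\Vert\u\Vert^2\left(\tfrac{25}{36}-\tfrac{4}{9}\right)=\tfrac14\kappa_1^2\Vert\u\Vert^2>0,
\]
since $\kappa_1>0$ and $\u\ne\bz$, which is exactly the claimed inequality.

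There is essentially no obstacle in this argument; the only point requiring any thought is the realization that this non-optimal, explicit choice of $\beta$ already suffices — and in fact yields cleaner constants than the geometric route via the least-squares $\beta$, which would require $\sin^2$ of the angle between $\a$ and $\b$ to be less than $1/\gamma$ and hence a careful two-sided perturbation estimate of that angle. (The constant $6$ in the hypothesis is comfortably enough room: the margin $\tfrac14\kappa_1^2\Vert\u\Vert^2$ is strictly positive, so nothing delicate is going on.)
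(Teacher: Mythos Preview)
Your proof is correct and is essentially the same as the paper's: both choose $\beta=\kappa_1/\kappa_2$ so that $\a-\beta\b=\kappa_1(\d_1-\d_2)$, then bound the two terms and arrive at the identical margin $\tfrac14\kappa_1^2\Vert\u\Vert^2$. The only cosmetic difference is that the paper expands the inner products term by term and applies Cauchy--Schwarz to each, whereas you apply the triangle inequality directly to $\Vert\u+\d_1\Vert$ and $\Vert\d_1-\d_2\Vert$; your version is in fact slightly tidier.
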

\begin{proof}
Let us take $\beta=\kappa_1/\kappa_2$.  Then 
\begin{eqnarray*}
\a^T\a-\gamma\Vert\a-\beta\b\Vert^2 &=&
\kappa_1^2(\u+\d_1)^T(\u+\d_1)-\gamma\Vert\kappa_1(\u+\d_1)-
\kappa_1(\u+\d_2)\Vert^2 \\
&=&\kappa_1^2\left[\u^T\u + 2\u^T\d_1+(1-\gamma)\d_1^T\d_1
-2\gamma\d_1^T\d_2-\gamma\d_2^T\d_2\right] \\
&=&\kappa_1^2\left[\u^T\u + 2\u^T\d_1-3\d_1^T\d_1
-8\d_1^T\d_2-4\d_2^T\d_2\right] \\
&\ge&\kappa_1^2\left[\Vert\u\Vert^2-2\Vert\u\Vert\cdot\Vert\d_1\Vert
-3\Vert\d_1\Vert^2-8\Vert\d_1\Vert\cdot\Vert\d_2\Vert -4\Vert\d_2\Vert^2\right]
\\
&\ge &\kappa_1^2\Vert\u\Vert^2(1-2/6-3/36-8/36-4/36) \\
&> & 0.
\end{eqnarray*}
\end{proof}
The point of Lemma~\ref{lem:incrobj} is as follows.  Suppose $(M,N)$ is
a putative solution for maximizing \eref{eq:objfunc} and $j\notin N$.
Suppose the right singular vector of $A(M,N)$ is $\s$, and suppose
that $A(M,j)$ and $\s$ are both within relative distance of $1/6$ from
another vector $P(M,k)$ after rescaling, 
where $k$ is the topic of column $j$.  
Recall that once $M$ and $\u$ are fixed, the objective function 
of \eref{eq:objfunc} becomes separable by columns, i.e., it is
possible to choose the $j$th
entry of $\v$ considering only the contribution
of column $j$ to the total objective function.
The previous lemma says that there is a way to choose $v_j$
so that $(M,N\cup\{j\})$ has a higher objective function value
than $(M,N)$, where we take the same $M,\u,\sigma$ and extend $\v$ with the
particular choice of $v_j$.   
This means that in fact $N$ is not optimal, because it should
also include $j$.  The lemma can also be used on rows using the
analogous argument.

Now let us apply this lemma to deduce the contents of the
optimal $M$ and $N$.

\begin{lemma}
Assume {\rm A1--A5} hold.
In the optimal solution, $D_k'\subset N$.
\label{lem:dksubsN}
\end{lemma}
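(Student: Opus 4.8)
The plan is to argue by contradiction using Lemma~\ref{lem:incrobj}. Throughout, $k$ denotes the unique topic of the acceptable heavy entries of the optimal $(M,N)$: this is well defined because Lemma~\ref{lem:heavyacc} guarantees that $A(M,N)$ contains at least one acceptable heavy entry, and Lemma~\ref{lem:uniqueheavy} shows all of them lie in a single topic. Suppose some $j\in D_k'$ fails to belong to the optimal $N$. I will show that appending column $j$ to $N$ strictly increases the value of \eref{eq:objfunc}, contradicting optimality of $(M,N)$.

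First I would freeze $M$, the normalized left ($M$-indexed) singular vector $\u(M)$ of $A(M,N)$, the leading singular value $\sigma$, and the optimal $\v(N)$, and recall the column-separability remark made right after Lemma~\ref{lem:incrobj}: if we keep these data and append one new coordinate $v_j=\beta/\sigma$, the change in \eref{eq:objfunc} equals $\Vert A(M,j)\Vert^2-\gamma\Vert A(M,j)-\beta\u(M)\Vert^2$. Since $\u(M),\sigma,\v(N)$ are the optimal auxiliary variables for the fixed pair $(M,N)$, the objective evaluated at them is exactly $\fopt$. Hence it suffices to exhibit a single scalar $\beta$ with $\Vert A(M,j)\Vert^2-\gamma\Vert A(M,j)-\beta\u(M)\Vert^2>0$, since then $(M,N\cup\{j\})$ together with these extended auxiliary variables would beat $\fopt$.

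To get such a $\beta$ I would apply Lemma~\ref{lem:incrobj} with $P(M,k)$ playing the role of the reference vector, $A(M,j)$ the role of $\a$, and $\u(M)$ the role of $\b$. Acceptability of $j\in D_k'$ gives $A(M,j)=l_j(P(M,k)+\d_1)$ with $\Vert\d_1\Vert\le\sqrt m\,\theta$ and $l_j\ge L^{1/2}>0$; and Lemma~\ref{lem:approxsingvec} gives $\u(M)=\kappa_2(P(M,k)+\d_2)$ for some $\kappa_2>0$ with $\Vert\d_2\Vert\le\Vert P(M,k)\Vert/6$. The one remaining point is the lower bound $\Vert P(M,k)\Vert\ge\chi$: the acceptable heavy entry $(i,j')$ supplied by Lemma~\ref{lem:heavyacc} has $i\in M\cap H_k$, so $\Vert P(M,k)\Vert^2\ge P(i,k)^2\ge\chi^2$. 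Then $\Vert\d_1\Vert\le\sqrt m\,\theta\le\chi/6\le\Vert P(M,k)\Vert/6$ by the choice $\theta\le\theta_3$ from \eref{eq:thetadef6}, so the hypotheses of Lemma~\ref{lem:incrobj} hold with $\kappa_1=l_j$ and $\gamma=4$, producing the required $\beta$ and hence the contradiction. Since $j$ was an arbitrary element of $D_k'$, this proves $D_k'\subset N$.

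I expect the argument to be mostly bookkeeping. The two genuinely load-bearing points are (i) the bound $\Vert P(M,k)\Vert\ge\chi$, which is exactly what lets both the acceptability perturbation $\d_1$ and the singular-vector perturbation $\d_2$ fit inside the $1/6$ tolerance that Lemma~\ref{lem:incrobj} requires, and (ii) the reduction, already spelled out after Lemma~\ref{lem:incrobj}, from ``some $\beta$ makes the single-column contribution positive'' to ``the full objective strictly increases,'' which rests on the column-separability of \eref{eq:objfunc} once $M$ and $\u(M)$ are fixed and on $\u(M),\sigma,\v(N)$ being optimal for $(M,N)$. The same template, with rows and columns interchanged, drives the companion argument for the rows of the optimal solution.
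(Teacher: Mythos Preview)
Your proposal is correct and follows essentially the same route as the paper: write $A(M,j)=l_j(P(M,k)+\d)$ using acceptability, invoke Lemma~\ref{lem:approxsingvec} to write the left singular vector as a small relative perturbation of $P(M,k)$, use the presence of a heavy row in $M$ to get $\Vert P(M,k)\Vert\ge\chi$ so that both perturbations fit the $1/6$ tolerance, and then apply Lemma~\ref{lem:incrobj}. The only differences are cosmetic (you frame it as a contradiction and swap the names $\d_1,\d_2$ relative to the paper), and you are slightly more explicit than the paper in citing Lemma~\ref{lem:heavyacc} for the existence of the heavy entry and in spelling out the column-separability reduction.
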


\begin{proof}
Let us consider a column $j\in D_k'$, that is, an
acceptable column for topic $k$. Observe that $A(M,j)=l_j(P(M,k)+\d_2)$,
where each entry of $\d_2$ has absolute value at most $\theta$
by definition of `acceptable.'  Now we observe that
$\Vert\d_2\Vert\le \theta m^{1/2}\le\chi/6\le \Vert P(M,k)\Vert/6$;
the first follows because $\Vert \d_2\Vert_\infty\le \theta$,
the second follows from \eref{eq:thetadef6}, and
the  third follows because $M$ contains at least one heavy row of $k$.

Thus, $A(M,j)=l_j(P(M,k)+\d_2)$ with $\d_2\le \Vert P(M,k)\Vert/6$
and the left singular vector $\s$ of $A(M,j)$ satisfies 
$\s=(P(M,k)+\d_1)/\kappa_s$, with $\Vert \d_1\Vert_2\le \Vert P(M,k)\Vert/6$
by \eref{eq:kappas}.

Thus, by Lemma~\ref{lem:incrobj}, column $j\in D_k'$ increases
the value of the objective function since $A(M,j)$ and the left
singular value of $A(M,N)$ are both scalar multiplies of perturbations
of $P(M,k)$, where the relative perturbation size is at most $1/6$.
This proves that all columns of $D_k'$ will lie in $N$.
\end{proof}

Recall that $H_k$ denotes the subset of $T_k$ of heavy rows (terms)
associated with topic $k$.

\begin{lemma}
Assume {\rm A1--A5} hold.
In the optimal solution, $H_k\subset M$.
\label{lem:hksubsM}
\end{lemma}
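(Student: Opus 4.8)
The plan is to mimic exactly the structure of the proof of Lemma~\ref{lem:dksubsN}, but now working with rows rather than columns. Specifically, I would take a heavy row $i\in H_k$ and show that $A(i,N)$, viewed as a row vector restricted to the optimal column set $N$, is (after rescaling) a small relative perturbation of the reference vector $\r_0$ defined in \eref{eq:r0def}, and that the right singular vector $\r$ of $A(M,N)$ is also, by \eref{eq:kappar}, a small relative perturbation of $\r_0$. Then Lemma~\ref{lem:incrobj} (applied with the roles of rows and columns interchanged, which is legitimate by the remark after Lemma~\ref{lem:mutualexclus} that \eref{eq:objfunc2} is invariant under transposition) forces $i$ into the optimal $M$.

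The key steps, in order, are as follows. First I would write $A(i,N)$ by splitting $N$ into the three pieces $N_1,N_2,N_3$ from \eref{eq:Npart}: on $N_3=N\cap D_k'$ the entries satisfy $|A(i,j)-l_jP(i,k)|\le l_j\theta$ by acceptability, so $A(i,N_3)$ is close to $P(i,k)\,\l(N_3)$; on $N_1$ (unacceptable columns) and $N_2$ (acceptable columns from other topics) the corresponding entries of $\r_0$ are zero, so I must bound $\Vert A(i,N_1)\Vert$ and $\Vert A(i,N_2)\Vert$ as ``noise.'' For $N_1$ I would use $\Vert A(:,D_{\rm unacc})\Vert_F^2$ from \eref{eq:adunacc}; for $N_2$ I would use that all such entries are acceptable and not heavy, hence bounded via \eref{eq:anotheavyacc} (note $A(i,N_2)$ entries are automatically not heavy since they are from topics $\ne k$, and Lemma~\ref{lem:uniqueheavy} rules out heavy entries from other topics in the optimum). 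Second, I would observe that since $i\in H_k$ we have $P(i,k)\ge\chi$, so $\Vert P(i,k)\l(N_3)\Vert = P(i,k)\Vert\l(N_3)\Vert \ge \chi\Vert\l(N_3)\Vert$, and combining with the lower bound \eref{eq:r0lb} on $\Vert\r_0\Vert^2$ this ``signal'' term dominates all the noise terms, giving $\Vert A(i,N)-P(i,k)\r_0\Vert \le \Vert\r_0\Vert\cdot P(i,k)/6$, i.e.\ $A(i,N)=\kappa_1(\r_0+\d_2)$ with $\Vert\d_2\Vert\le\Vert\r_0\Vert/6$ after rescaling. Third, from \eref{eq:kappar}, the right singular vector $\r$ of $A(M,N)$ equals $\kappa_r^{-1}(\r_0+\d_1)$ with $\Vert\d_1\Vert\le\Vert\r_0\Vert/6$. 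Finally, applying Lemma~\ref{lem:incrobj} with $\u=\r_0$ (transposed version) shows that adjoining row $i$ to $M$ strictly increases \eref{eq:objfunc}, contradicting optimality of $M$ unless $i\in M$ already; hence $H_k\subset M$.

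The main obstacle I expect is bookkeeping the constants so that the noise-to-signal ratio really comes out below $1/6$ — in particular verifying that $\Vert A(i,N_1)\Vert^2+\Vert A(i,N_2)\Vert^2$ is small relative to $\chi^2\Vert\l(N_3)\Vert^2$. The delicate point is that $\Vert\l(N_3)\Vert$ could a priori be small if $N$ happened to contain few columns of $D_k'$; this is exactly where I must invoke Lemma~\ref{lem:dksubsN}, which guarantees $D_k'\subset N$, hence $N_3=D_k'$ and $\Vert\l(N_3)\Vert=\Vert\l(D_k')\Vert$, which is bounded below by \eref{eq:ldkplb}. With that substitution the chain of inequalities reduces to comparing quantities of the form $\phi\cdot(\text{stuff})$ and $m(\chi+\theta)^2\Vert\l\Vert^2$ against $\chi^2 n_kL^2/q$, and the specific choices \eref{eq:phidef3}, \eref{eq:thetadef6}, and \eref{eq:chidef1}--\eref{eq:chidef2} of $\phi$, $\theta$, $\chi$ are designed precisely to make these work; the argument is thus a near-mechanical transcription of the $N$-side reasoning with $\l$ and $P(\cdot,k)$ swapping roles.
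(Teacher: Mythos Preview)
Your overall plan is exactly the paper's: write $A(i,N)=P(i,k)(\r_0+\d_2)$, partition $N=N_1\cup N_2\cup N_3$ as in \eref{eq:Npart}, bound the three pieces of $\d_2$ separately, compare to $\Vert\r_0\Vert$ via \eref{eq:r0lb}, and invoke Lemma~\ref{lem:incrobj}. Two points where your sketch departs from the paper deserve attention.

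\textbf{The $N_2$ bound via \eref{eq:anotheavyacc} is too weak.} That inequality gives $\Vert A(i,N_2)\Vert^2\le nL^2/(16\cdot256\cdot25^2qtm)$, but you then need $\Vert\d_2(N_2)\Vert^2=\Vert A(i,N_2)\Vert^2/P(i,k)^2$ to be small compared with $\Vert\r_0\Vert^2\ge nL^2/(278qtm)$. Since you only know $P(i,k)\ge\chi$, this requires $\chi^2$ bounded below by an absolute constant independent of $m,t$ --- but $\chi$ scales like $1/m$ by \eref{eq:chidef1}--\eref{eq:chidef2}, so the inequality fails for large $m$. The paper instead uses the sharper fact that $i\in T_k$ and $\top(j)\ne k$ for $j\in N_2$, so $P(i,\top(j))\le\eps$ by $\eps$-separability; this gives $A(i,j)\le l_j(\eps+\theta)$ and hence $\Vert\d_2(N_2)\Vert^2\le 8((\eps+\theta)/\chi)^2nL^2/(3q)$, and the definition \eref{eq:thetadef6} together with $\eps\le\theta$ from \eref{eq:epsdef} is precisely what forces $(\eps+\theta)/\chi$ small enough. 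In short, ``not heavy'' (i.e.\ $P(i,\top(j))<\chi$) is not the right bound here; you need the full separability bound $P(i,\top(j))\le\eps$.

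\textbf{The appeal to Lemma~\ref{lem:dksubsN} is unnecessary.} The paper never uses that lemma in this proof. Your worry that $\Vert\l(N_3)\Vert$ might be small is handled directly by \eref{eq:r0lb}, which lower-bounds $\Vert\r_0\Vert$ (not $\Vert\l(N_3)\Vert$) and was already proved in Lemma~\ref{lem:approxsingvec} from the objective-value lower bound \eref{eq:fopt}. Going through Lemma~\ref{lem:dksubsN} and \eref{eq:ldkplb} would give a bound in terms of $n_k$ rather than $n/t$, and you have no a priori control on $\tau_k$ for the particular $k$ singled out by Lemma~\ref{lem:uniqueheavy}, so that route actually creates a new difficulty rather than resolving one.
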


\begin{proof}
Let us consider a row $i\in H_k$.
Let us write $A(i,N)=P(i,k)(\r_0+\d_2)$ and
try to estimate $\d_2$.  By definition of $H_k$, $P(i,k)\ge\chi$.
We can obtain an upper bound on
$\d_2=A(i,N)/P(i,k)-\r_0(N)$
as follows.  Use the partition of $N$ given by
\eref{eq:Npart}.
Then 
\begin{eqnarray*}
\Vert \d_2(N_1)\Vert^2 &\le & \Vert A(i,N_1)\Vert^2/P(i,k)^2 + \Vert\r_0(N_1)\Vert^2
\\
&\le &
(1/\chi^2)\Vert\l(N_1)\Vert^2 + \Vert \l(N_1)\Vert^2 
\\
&\le&
(1+1/\chi^2)\Vert \l(D_{\rm unacc})\Vert^2 
\\
&\le & 
32\phi(1+1/\chi^2)nL^2/(3q),
\end{eqnarray*}
using \eref{eq:aminusb} for the first line,
\eref{eq:r0def},
\eref{eq:ajltlj} and $P(i,k)\ge \chi$ for the second line,
$N_1\subset D_{\rm unacc}$ for the third,
\eref{eq:ldunacc1} for the fourth.

Next,
\begin{eqnarray*}
\Vert \d_2(N_2)\Vert^2 &= & \Vert A(i,N_2)/P(i,k)^2 -\r_0(N_2)\Vert^2
\\
&=& \Vert A(i,N_2) \Vert^2/P(i,k)^2
\\
&= & (1/P(i,k)^2)\sum_{j\in N_2} A(i,j)^2 
\\
&\le & (1/P(i,k)^2)\sum_{j\in N_2} l_j^2(P(i,\top(j))+\theta)^2
\\
&\le &
(1/\chi^2)\sum_{j\in N_2} l_j^2(\epsilon+\theta)^2 
\\
&\le &
((\epsilon+\theta)/\chi)^2 \Vert \l\Vert^2 
\\
&\le &
8((\epsilon+\theta)/\chi)^2nL^2/(3q).
\end{eqnarray*}
For the second line we used the fact that $\r_0(N_2)=\bz$, which follows
from \eref{eq:r0def} and \eref{eq:Npart}.  For the fourth line we
used the fact that $j$ is acceptable.  For the fifth we used
$P(i,k)\ge\chi$ and
$P(i,\top(j))\le \eps$ since
$i\in H_k\subset T_k$ and $j\notin D_k$.
For the last line we used \eref{eq:lub}.

Finally,
\begin{eqnarray*}
\Vert \d_2(N_3)\Vert^2 &=& 
\sum_{j\in N_3} (A(i,j)/P(i,k)-l_j)^2 
\\
&\le & 
\sum_{j\in N_3} l_j^2\theta^2 
\\
&\le & \theta^2 \Vert \l(D_k)\Vert^2
\\
& \le & 8\theta^2n_kL^2/(3q).
\end{eqnarray*}
The second line follows because $N_3\subset D_{\rm acc}$.  The last
line follows from \eref{eq:ldkub}.
Thus,
$$\Vert \d_2\Vert \le 
(1+1/\chi^2)\frac{32\phi nL^2}
{3q}
+
\frac{8(\eps+\theta)^2nL^2}{\chi^2\cdot 3q}
+
\frac{8\theta nL^2}{3q}
.$$
Now apply 
\eref{eq:phidef3} to the first term (plus the 
fact $(1+1/\chi^2)\le 2/\chi^2$),
\eref{eq:thetadef6}
to the second and 
\eref{eq:thetadef5} to the third term
to conclude that
$$\Vert \d_2\Vert \le \frac{nL^2}{6\cdot 278\cdot qmt}.$$
Comparing to \eref{eq:r0lb}, $\Vert \d_2\Vert \le \Vert \r_0\Vert/6$.
Thus, $A(i,N)$ is a perturbation of $\r_0$ of relative size at most
$1/6.$  By \eref{eq:kappar}, the right singular vector of $A(M,N)$ is
also a perturbation of $\r_0$ of relative size at most $1/6$.  By
Lemma~\ref{lem:incrobj}, inserting $i$ into $M$ can only increase the
objective function value.
\end{proof}

Now finally we can prove Theorem~\ref{thm:main}.
\begin{proof}
Consider an entry $(i,j)\in (T_k\times D_k)\bigtriangleup(M\times N)$.
We take two cases: the first case is $(i,j)\in (T_k\times D_k)-(M\times N)$.
In this case, since $H_k\subset M$ and $D_k'\subset N$ as proved in the two
preceding lemmas, it must be the case that either $j\in D_k-D_k'$ or
$i\in T_k-H_k$, i.e., either the entry is in an unacceptable column or
it is a acceptable but not heavy.

The second case is $(i,j)\in (M\times N)-(T_k\times D_k)$.  
Thus, either $j$ is on a topic other than $k$ (i.e., $j\notin D_k$),
or it is on topic $k$ but
is not a heavy entry (i.e., $j\in D_k$ but $i\notin T_k$, so 
$i\notin H_k$).
Thus, either $j$ is
an unacceptable column, or else $i$ is not a heavy entry, because if
by Lemma~\ref{lem:uniqueheavy}, $A(M,N)$ cannot contain any acceptable
and heavy entries except on topic $k$.

Thus, we see that all entries indexed by
$(T_k\times D_k)\bigtriangleup(M\times N)$ are either unacceptable or
not heavy.  The maximum norm of unacceptable entries is given by
\begin{eqnarray*}
\Vert A(:,D_{\rm unacc})\Vert_F^2 &\le& 32\phi nL^2/(3q) \\
&\le &  \alpha nL^2/(512qtm),
\end{eqnarray*}
where the first line comes from \eref{eq:adunacc0} and the
second from \eref{eq:phidef4}.

The maximum norm of entries that are acceptable but not heavy is
\begin{eqnarray*}
\sum_{\mbox{\scriptsize $A(i,j)$ not heavy \& $j$ acceptable}} A(i,j)^2
&\le & 32m\chi^2nL^2/(3q) 
\\
&\le & \alpha nL^2/(512qtm),
\end{eqnarray*}
where the first line comes from \eref{eq:anotheavyacc0} 
and the second from \eref{eq:chidef2}.
Thus, adding the two previous inequalities shows that the sum of
entries indexed by the symmetric difference 
$(T_k\times D_k)\bigtriangleup(M\times N)$ is at most $\alpha nL^2/(256qtm)$.
This is a fraction of at most $\alpha$ times the optimal value as
shown by \eref{eq:fopt}.
\end{proof}

\section{Behavior of the objective function on decomposable bitmap images}
\label{sec:imagetheory}

We consider the behavior of objective function \eref{eq:objfunc} on
{\em decomposable bitmap images}.  A {\em bitmap image} is
one in which each pixel is either white (0) or black (1).
Suppose that $A$ is an $m\times n$ matrix encoding a family of
images; here $m$ is the number of pixels per image and $n$ is the
number of images.  Since the images are assumed to be bitmaps, every
entry of $A$ is either 0 or 1.

A collection
of images  is {\em decomposable} if there exists a partitioning of the pixel
positions $\{1,\ldots,m\}$ 
into $t$ subsets  $T_1,\ldots,T_t$, called {\em features},
such that for every $k$, every image is
either black in all of $T_k$ or is white in all of $T_k$.
Clearly
any collection of images is decomposable into
individual pixels (i.e., $T_1=\{1\}$, $T_2=\{2\}$, etc.), so the interesting
case is when $t\ll m$.  Donoho and Stodden
\cite{donoho} have considered a particular kind
of decomposable bitmap image database.

We now consider a simple probabilistic model of generating a database
of $n$ decomposable bitmap images (i.e., an $m\times n$ matrix)
and prove that the objective function
\eref{eq:objfunc} is able to identify a feature in the database with
high probability.  There are many other ways to define a model
for which a similar theorem could be proved.

\begin{theorem}
Let $T_1,\ldots,T_t$, the features,
be a partition of $\{1,\ldots,m\}$ with $t>1$.
Let $m_{\min},m_{\max}$ denote $\min_{k=1,\ldots,t}|T_k|$, $\max_{k=1,\ldots,t}
|T_k|$ respectively.
Let $l$ be an integer in $1,\ldots,t/2$.
Assume that each of the $n$ images in the matrix $A$ is generated 
independently by selecting exactly
$l$ features uniformly at random out of the possible $t$.
Finally, assume that $\gamma>4m/m_{min}$.  Then
with probability tending to $1$ as $n\rightarrow\infty$, 
the optimizer of \eref{eq:objfunc} applied to this $A$ 
will select $M=T_k$ for some $k$ such that $|T_k|=m_{\max}$.
\end{theorem}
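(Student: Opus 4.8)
\noindent\emph{Proof plan.} I would reduce \eref{eq:objfunc} to its two--variable form \eref{eq:objfunc2}, so that a candidate is a pair $(M,N)$ with value $f(M,N)=\gamma\,\sigma_1(A(M,N))^2-(\gamma-1)\Vert A(M,N)\Vert_F^2$, and then argue by a case analysis on how many feature blocks the optimal $M$ meets. Write $S_j\subseteq\{1,\ldots,t\}$ for the $l$--element set of features switched on in image $j$; for sets $M,N$ put $K(M)=\{k:M\cap T_k\ne\emptyset\}$, $a_k=|M\cap T_k|$, $c_k=|\{j\in N:k\in S_j\}|$, $d_{kk'}=|\{j\in N:k,k'\in S_j\}|$, $n_k=|\{j:k\in S_j\}|$ and $n_K=|\{j:K\subseteq S_j\}|$. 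The first thing to record is that, since every image is constant on each block, each row of $A(M,N)$ indexed by $M\cap T_k$ equals the indicator vector of $\{j\in N:k\in S_j\}$; hence $\Vert A(M,N)\Vert_F^2=\sum_{k\in K(M)}a_kc_k$, and after deleting any all--zero rows (harmless) the rank of $A(M,N)$ is the dimension of the span of its $|K(M)|$ distinct rows. The lower bound on $\fopt$ is then immediate: with $M=T_{k_0}$ for a fixed $k_0$ attaining $|T_{k_0}|=m_{\max}$ and $N=\{j:k_0\in S_j\}$, $A(M,N)$ is the all--ones $m_{\max}\times n_{k_0}$ matrix, hence rank one and $f=m_{\max}n_{k_0}$; a Chernoff bound gives $n_{k_0}\ge(1-\delta)nl/t$ with probability $\to 1$ for any fixed $\delta>0$, so $\fopt\ge(1-\delta)m_{\max}nl/t$. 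The single--block case is then easy: if $M\subseteq T_k$ for a single $k$, all rows of $A(M,N)$ coincide, so $A(M,N)$ has rank $\le1$ and $f=|M|c_k\le|M|n_k\le(1+\delta)|M|nl/t$ with high probability; unless $M=T_k$ with $|T_k|=m_{\max}$ (the claimed outcome) one has $|M|\le m_{\max}-1$, which is strictly below $\fopt$ once $\delta$ is small and $n$ large. So it remains only to exclude an optimal $M$ that meets two or more blocks.

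Assume the optimal $(M,N)$ has $|K(M)|=r\ge 2$ (we may assume each $k\in K(M)$ has $c_k>0$, deleting zero rows as above). For distinct $k,k'\in K(M)$ choose rows $i\in M\cap T_k$, $i'\in M\cap T_{k'}$. Since $\gamma>4m/m_{\min}>2$, Lemma~\ref{lem:mutualexclus} with the roles of $M$ and $N$ exchanged (as in its Remark), applied to the optimal $(M,N)$, shows that $i$ and $i'$ cannot both lie in $M$ unless $d_{kk'}/\sqrt{c_kc_{k'}}\ge 1-2/\gamma$; combined with $d_{kk'}\le c_k$ this gives $d_{kk'}\ge(1-2/\gamma)^2c_{k'}$, i.e.\ at most $4c_{k'}/\gamma$ columns of $N$ contain $k'$ but not $k$. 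Summing the $r-1$ such bounds over $k\in K(M)\setminus\{k'\}$ gives $c_{k'}\le n_K+4(r-1)c_{k'}/\gamma$, and since $\gamma>4m/m_{\min}\ge 4t\ge 4r$ this rearranges to $c_{k'}\le n_K/(1-4(r-1)/\gamma)$ for every $k'\in K(M)$. In particular, if $r>l$ then $n_K=0$ and every $c_{k'}=0$, a contradiction; so $2\le r\le l$. Using $\sigma_1(A(M,N))^2\le\Vert A(M,N)\Vert_F^2$ and $|M|\le\sum_{k\in K(M)}|T_k|\le r\,m_{\max}$,
\[
f(M,N)\ \le\ \sum_{k\in K(M)}a_kc_k\ \le\ |M|\,\max_{k}c_k\ \le\ \frac{r\,m_{\max}\,n_K}{1-4(r-1)/\gamma}.
\]

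Now a Chernoff bound together with a union bound over the finitely many subsets $K$ (finitely many because $t$ is fixed) gives $n_K\le(1+\delta)\,n\prod_{i=0}^{r-1}\frac{l-i}{t-i}$ with probability $\to 1$. Comparing with $\fopt\ge(1-\delta)m_{\max}nl/t$, everything comes down to the purely numerical inequality
\[
\frac{r(1+\delta)}{1-4(r-1)/\gamma}\ \prod_{i=1}^{r-1}\frac{l-i}{t-i}\ <\ 1-\delta\qquad(2\le r\le l).
\]
Because $l\le t/2$ each factor satisfies $\frac{l-i}{t-i}\le\frac{l-1}{t-1}\le\frac{t-2}{2(t-1)}<\tfrac12$, and because $\gamma>4t$ we have $4(r-1)/\gamma<(r-1)/t$; a short computation then verifies the inequality for small enough $\delta$ with a gap bounded away from $0$, the tightest instance being $r=2$, where it reduces to $\frac{2(l-1)}{t-1}<\frac{t-1}{t}$, i.e.\ $2t(l-1)<(t-1)^2$, which holds with slack at least $\tfrac1{t(t-1)}$ precisely because $l\le t/2$. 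This contradicts the optimality of $(M,N)$, so the optimal $M$ meets exactly one block; by the first paragraph it is therefore a full $T_k$ with $|T_k|=m_{\max}$, and the theorem follows.

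The step I expect to be the main obstacle is the multi--block case, where one must control both the adaptively chosen $N$ and the leading singular value at once. The transposed mutual--exclusion lemma does the first job --- it forces $N$ to sit essentially on the images that contain \emph{all} of $K(M)$ --- while for the second the crude bound $\sigma_1^2\le\Vert A(M,N)\Vert_F^2$ already suffices, with the slack supplied entirely by $l\le t/2$ (which makes $n_K$ much smaller than the $n_k$) and by $\gamma$ being large (which keeps $1-4(r-1)/\gamma$ near $1$). The one point requiring care in the closing estimate is that in the borderline case $r=2$ one must keep the exact factor $\frac{l-1}{t-1}$ rather than the bound $\tfrac12$.
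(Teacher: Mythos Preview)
Your argument is correct, but it takes a genuinely different path from the paper's, and it is worth seeing where they diverge. The paper begins with two structural reductions that you do not use. First, since all rows indexed by a single block $T_k$ are identical, the row-separability formula \eref{eq:gidef} forces the optimal $M$ to be a \emph{union of full blocks}; this immediately disposes of the partial-block cases you handle probabilistically. Second, and this is the main divergence, the paper applies Lemma~\ref{lem:mutualexclus} to \emph{columns}, not rows: if $T_k\subset M$ and column $j$ uses feature $k$ while $j'$ does not, a direct $0/1$ computation gives cosine at most $\sqrt{1-m_{\min}/m}<1-2/\gamma$, so $j$ and $j'$ cannot both sit in the optimal $N$. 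Hence every column of $A(M,N)$ is either all-ones or all-zeros on $M$, the submatrix is \emph{exactly} rank one, and $f(M,N)=|M|\cdot|N|$. The comparison then reduces cleanly to checking $s\cdot\prod_{i=0}^{s-1}\frac{l-i}{t-i}\le l/t$, with no $\gamma$-dependent correction.

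Your route---applying the transposed lemma to rows to force $c_{k'}\le n_K/(1-4(r-1)/\gamma)$ and then bounding $f$ by $\Vert A(M,N)\Vert_F^2$---works, but it trades the paper's exact structural conclusion for an approximate one, and the price is the extra factor $(1-4(r-1)/\gamma)^{-1}$ in the final numerical inequality. You correctly identify that $r=2$ with $l=t/2$ is the tight case and that the slack there is of order $1/t^2$, which is enough since $t$ is fixed; but the paper's version avoids this delicacy entirely. What your approach buys is that it leans less on the exact $0/1$ geometry of the columns and would extend more readily to noisy variants of the model; what the paper's approach buys is a cleaner endgame with no $\gamma$ in the combinatorial estimate.
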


\begin{proof}
Let $(M,N)$ be the optimizing solution of \eref{eq:objfunc}.  
Observe that, for any $k$,
all the rows of $A$ indexed by $T_k$ are identical
by construction.  Therefore, it follow from \eref{eq:gidef} that
if any row from $T_k$ lies in $M$, then all of $T_k$ must
be included in $M$ since all have the same $g_i$ value.
Thus, $M$ is a union of some of the $T_k$'s.

We claim that it is impossible that 
the optimal $N$ contains two columns $j$ and $j'$ such that bitmap
$j$ contains feature $T_k$ for $T_k\subset M$
while $j'$ does not contain $T_k$.
The reason is that in this case, $A(M,j)$ consists of a vector 
with $1$'s in positions indexed by $T_k$ and while $A(M,j')$
has $0$'s in these positions.   Let $m_1$ be the number of `1' pixels
in $j$ and $m_2$ be the number of `1' pixels in $j'$.
Observe $m_1\le m$ and $m_2\le m$.
Let $q=|T_k|$ so that $q\ge m_{\min}.$
Then the number of `1' pixels
in common between images $j$ and $j'$ is at most $\min(m_1-q,m_2)$.
Consider the left-hand
side of \eref{eq:bigangle}:
\begin{eqnarray}
\frac{A(M,j)^TA(M,j')}{\Vert A(M,j)\Vert\cdot\Vert A(M,j')\Vert}
&\le &\frac{\min(m_1-q,m_2)}{\sqrt{m_1m_2}} \nonumber \\
&\le &\left \{
\begin{array}{ll}
\frac{m_1-q}{\sqrt{m_1m_2}}, &  \mbox{if $m_1-q\le m_2$}, \\
\frac{m_2}{\sqrt{m_1m_2}}, &  \mbox{if $m_1-q\ge m_2$}
\end{array}
\right. \nonumber \\
& \le &
\left \{
\begin{array}{ll}
\frac{m_1-q}{\sqrt{m_1(m_1-q)}}, &  \mbox{if $m_1-q\le m_2$}, \\
\frac{\sqrt{m_1-q}}{\sqrt{m_1}}, &  \mbox{if $m_1-q\ge m_2$}
\end{array}
\right. \nonumber \\
& = & \sqrt{1-q/m_1} \nonumber \\
& \le & \sqrt{1-m_{\min}/m} \nonumber \\
& \le & 1-m_{\min}/(2m). \label{eq:imangbound}
\end{eqnarray}
By assumption, $\gamma>4m/m_{\min}$, so the right-hand side
of \eref{eq:imangbound} is less than $1-2/\gamma$.
Thus, by Lemma~\ref{lem:mutualexclus},
not both $j$ and $j'$ can be in the optimal choice of $N$.

Thus, we conclude that all columns taking part in the optimal solution
must have all 1's (or all 0's) in positions indexed by $M$.  Ignore
the columns of all 0's since their presence does not affect the
objective function value.  Consider now a feature $k$ such that
$|T_k|=m_{\max}$.  Feature $k$ is expected to occur in
the fraction $l/t$ of columns of $A$.  
For any $\eps>0$, by choosing $n$ sufficiently
large, we can assume with probability arbitrarily close to 1
that this choice occurs in the fraction at least $l/t-\eps$ of the columns.
Therefore, 
\begin{equation}
f(T_k,N)\ge n(l/t-\eps)m_{\max},
\label{eq:T_kval}
\end{equation}
for any $\eps>0$ and $n$ sufficiently large,
where $N$ is the set of
columns containing feature $k$.

Now consider any other possible choice of $M$; suppose e.g., that
$M$ has $s$ of the features.
By the preceding argument, the optimal choice
of $N$ that could accompany this $M$ contains only columns that
use  all $s$ features.
This union
of $s$ features 
is expected to occur in the fraction 
$$\frac{l(l-1)\cdots(l-s+1)}{t(t-1)\cdots(t-s+1)}$$
of the columns.  
Thus, for any $\eps>0$, for $n$ sufficiently large, 
\begin{equation}
f(M,N) \le n\cdot \frac{l(l-1)\cdots(l-s+1)+\eps}{t(t-1)\cdots(t-s+1)}\cdot sm_{\max}.
\label{eq:fMN}
\end{equation}
(The factor $sm_{\max}$ is the maximum
contribution to $\Vert A(M,N)\Vert_F$ from
a particular column of $N$.)  Now it is a simple matter to check that
for any positive integers $l$, $t$ such that $l\le t/2$ and $s\le l$,
$$\frac{l(l-1)\cdots(l-s+1))}{t(t-1)\cdots(t-s+1)}\cdot s
\le \frac{l}{t}$$
with strict inequality for $s>1$.  Thus, by comparing
\eref{eq:T_kval} with \eref{eq:fMN}, we conclude that
as $n\rightarrow\infty$, 
with probability tending to 1, $f(T_k,N)$ is the optimal value
of the objective function.
\end{proof}

It should be noted that the previous theorem states that the optimal
$M$ includes a single feature $k$ but says nothing about the optimal $N$.
Indeed, as noted in the proof, we can take the optimal $N$ to
be $\{1,\ldots,n\}$.  In some situations it might be desirable for the
optimal $N$ to include only those images that use feature $k$.
This can be achieved by including a penalty term \eref{eq:secondpenalty}
into the objective function in which $\rho$ is chosen to be a very
small positive coefficient.

\section{On the complexity of maximizing $f(M,N)$}
\label{sec:nphard}

In this section, we observe that the problem of globally maximizing
\eref{eq:objfunc2} is NP-hard at least in the case that $\gamma$
is treated as an input parameter.  This observation explains why
R1D settles for a heuristic maximization of \eref{eq:objfunc2} rather
than exact maximization.
First, observe that the
maximum biclique (MBC) problem is NP-hard as proved by Peeters \cite{Peeters}.
We show that the MBC problem can
be transformed to an instance of \eref{eq:objfunc2}.

Let us recall the definition of the MBC problem.  The input is
a bipartite graph $G$.  The problem is to find an $(m,n)$-complete bipartite
subgraph $K$ (sometimes called a {\em biclique}) 
of $G$ such that $mn$ is maximized, i.e., the number of edges of $K$ is
maximized.

Suppose we are given $G$, an instance of the maximum biclique problem.
Let $A$ be the left-right adjacency matrix of $G$, that is, if $G=(U,V,E)$
where $U\cup V$ is the bipartition of the node set, then $A$ has
$|U|$ rows and $|V|$ columns, and $A(i,j)=1$ if $(i,j)\in E$ for $i\in U$
and $j\in V$, else $A(i,j)=0$.

Consider maximizing \eref{eq:objfunc2} for this choice of $A$.  We
require the following preliminary linear-algebraic lemma.

\begin{lemma}
Let $A$ be a matrix that has either of the
following as a submatrix:
\begin{equation}
U_1=\left(
\begin{array}{cc}
1  & 0 \\
0 & 1
\end{array}
\right)
\mbox{ or }
U_2=\left(
\begin{array}{cc}
1  & 1 \\
0 & 1
\end{array}
\right).
\label{eq:u1u2}
\end{equation}
Then $\sigma_2(A)>0.618$.
\end{lemma}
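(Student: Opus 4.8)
The plan is to reduce the statement to a trivial $2\times2$ eigenvalue computation by invoking the interlacing property of singular values under restriction to a submatrix. Recall (see, e.g., Golub and Van Loan \cite{GVL}, or apply the Cauchy interlacing theorem to the Gram matrix $A^TA$) that deleting a single row or a single column from a matrix cannot increase any of its singular values; applying the one-at-a-time bound $\sigma_{i+1}(A)\le\sigma_i(B)\le\sigma_i(A)$ repeatedly shows that if $B$ is the submatrix of $A$ indexed by an arbitrary row set $M'$ and column set $N'$, then $\sigma_i(B)\le\sigma_i(A)$ for every $i$ for which both sides are defined. In particular, if $A$ contains $U_1$ or $U_2$ as a $2\times2$ submatrix, then $A$ has at least two rows and two columns, $\sigma_2(A)$ is defined, and it suffices to prove $\sigma_2(U_1)>0.618$ and $\sigma_2(U_2)>0.618$.

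For $U_1$ there is nothing to do: $U_1$ is the $2\times2$ identity, so $\sigma_1(U_1)=\sigma_2(U_1)=1>0.618$. For $U_2$ I would form $U_2^TU_2=\left(\begin{smallmatrix}1&1\\1&2\end{smallmatrix}\right)$, whose characteristic polynomial is $\lambda^2-3\lambda+1$, with roots $(3\pm\sqrt5)/2$. Since $(3-\sqrt5)/2=\bigl((\sqrt5-1)/2\bigr)^2$, the smaller singular value is $\sigma_2(U_2)=(\sqrt5-1)/2=0.61803\ldots$, which exceeds $0.618$ because $\sqrt5>2.236$. Combining the two cases with the interlacing inequality then gives $\sigma_2(A)>0.618$.

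The argument has essentially no hard step; the only point needing a moment's care is to use the version of singular-value interlacing that permits deletion of an arbitrary (not merely trailing) set of rows and columns, which is obtained by iterating the one-row/one-column bound. It is also worth noting that the constant in the lemma is tight apart from the decimal truncation: the extremal case is $U_2$, for which $\sigma_2$ equals the golden-ratio conjugate $(\sqrt5-1)/2$, and $0.618$ is merely a clean rational underestimate of that number, chosen so that the strict inequality holds with room to spare.
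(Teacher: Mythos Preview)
Your proof is correct and follows essentially the same route as the paper: reduce to the $2\times 2$ case via the monotonicity $\sigma_i(\text{submatrix})\le\sigma_i(A)$, then compute $\sigma_2(U_1)=1$ and $\sigma_2(U_2)=(\sqrt5-1)/2>0.618$. The only cosmetic difference is that the paper justifies the submatrix inequality via the Eckart--Young characterization $\sigma_i(A)=\min\{\Vert A-B\Vert_2:\rank B=i-1\}$ together with $\Vert U\Vert_2\le\Vert A\Vert_2$, whereas you invoke Cauchy interlacing on the Gram matrix; both are standard and equally short.
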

\begin{proof}
First,
observe that if $U$ is a submatrix of $A$, then $\Vert U\Vert_2\le
\Vert A\Vert_2$.  This follows directly from the operator
definition of the matrix 2-norm. 

Next, recall the following preliminary well known
fact (Theorem 2.5.3 of \cite{GVL}): 
For any matrix $A$,
$\sigma_i(A)=\min\{\Vert A-B\Vert_2: \rank(B)=i-1\}$.  
This
fact implies the following
generalization of the result in the
previous
paragraph: if $U$ is a submatrix of $A$, say $U=A(M,N)$,
then for any $i$,
$\sigma_i(U)\le \sigma_i(A)$.  
The reason is
that if $\tilde B\in\R^{|M|\times |N|}$ is a rank-$k$ matrix, 
then
$B\in\R^{m\times n}$ defined by padding with zeros is also rank
$k$, and $\Vert A-B\Vert_2\ge \Vert U-\tilde B\Vert_2$ by the
result in the previous paragraph. 
Now finally the lemma is proved, since $\sigma_2(U_1)=1$ and
$\sigma_2(U_2)>0.618.$
\end{proof}

This lemma leads to the following lemma.
\begin{lemma}
Suppose all entries of
$A\in \R^{m\times n}$ are either $0$ or $1$, and suppose
and at least one entry is 1.  Suppose $M,N$ are the optimal solution
for maximizing $f(M,N)$ given by \eref{eq:objfunc2}.  Suppose also that
the parameter $\gamma$ is chosen to be $2.7mn+1$ or larger.
Then the optimal choice of $M,N$ must yield a matrix $A(M,N)$
of all $1$'s, possibly
augmented with some rows or columns that are entirely zeros.
\end{lemma}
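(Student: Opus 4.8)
The plan is to prove the statement by contradiction: if the optimal $A(M,N)$ were \emph{not} all ones up to zero rows and columns, then $f(M,N)$ would come out strictly negative, which is impossible since the optimal value is at least $1$. The key input is the preceding lemma, which forces $\sigma_2(A(M,N))>0.618$ as soon as $A(M,N)$ contains $U_1$ or $U_2$ as a submatrix.

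First I would record an elementary combinatorial observation: if $B$ is a $0$--$1$ matrix with no all-zero row and no all-zero column, and $B$ is not the all-ones matrix, then $B$ has a $2\times 2$ submatrix which, after permuting its two rows and its two columns if necessary, equals $U_1$ or $U_2$ (as in \eref{eq:u1u2}). To see this, pick an entry $B(i,j)=0$; since row $i$ and column $j$ are not all zero, choose $j'$ with $B(i,j')=1$ and $i'$ with $B(i',j)=1$, and inspect the submatrix on rows $\{i,i'\}$ and columns $\{j,j'\}$: up to the ordering of its rows and columns it equals $U_1$ if $B(i',j')=0$ and $U_2$ if $B(i',j')=1$. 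Since row and column permutations are orthogonal transformations, they preserve singular values, so the preceding lemma still yields $\sigma_2>0.618$ for such a submatrix.

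Next I would note that the optimal value is at least $1$: since $A$ has a $1$ entry, say $A(i,j)=1$, the choice $M=\{i\}$, $N=\{j\}$ makes $A(M,N)$ the $1\times1$ matrix $(1)$, so $p=1$ and $f(M,N)=\sigma_1^2=1$; and the maximum is attained because there are only finitely many pairs $(M,N)$. Now suppose the optimal $A(M,N)$ is not of the claimed form. Delete its all-zero rows and columns to obtain $B$; this changes neither the nonzero singular values nor the value of \eref{eq:objfunc2}, so $f(M,N)$ equals the corresponding value for $B$, and $B$ has no all-zero row or column. Since the optimum is positive, $B$ is nonempty, and by hypothesis it is not all ones; hence by the combinatorial observation $B$ — and therefore $A(M,N)$ — contains $U_1$ or $U_2$, so $\sigma_2(A(M,N))>0.618$ by the preceding lemma. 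Because every entry of $A(M,N)$ is $0$ or $1$, we have $\Vert A(M,N)\Vert_F^2\le mn$, and therefore
$$f(M,N)=\sigma_1(A(M,N))^2-(\gamma-1)\sum_{i\ge2}\sigma_i(A(M,N))^2\le \Vert A(M,N)\Vert_F^2-(\gamma-1)\,\sigma_2(A(M,N))^2\le mn-(\gamma-1)(0.618)^2.$$
Since $\gamma-1\ge 2.7mn$ and $2.7\cdot(0.618)^2>1$, the right-hand side is strictly negative, contradicting $f(M,N)=\fopt\ge1$. Hence the optimal $A(M,N)$ must be all ones, possibly augmented with zero rows or columns.

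All the estimates here are routine; the only step requiring any idea at all is the combinatorial characterization of non-all-ones $0$--$1$ matrices, and that is a short case check. The hypothesis $\gamma\ge 2.7mn+1$ enters only at the very end, where it guarantees that a single unit of spurious rank-$\ge2$ content (worth at least $(\gamma-1)(0.618)^2$ in penalty) outweighs the largest conceivable gain $mn$ in the leading term $\sigma_1^2$.
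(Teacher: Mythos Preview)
Your proof is correct and follows essentially the same approach as the paper: show $\fopt\ge 1$ via a single $1$-entry, then argue that a non-conforming $A(M,N)$ must contain $U_1$ or $U_2$ as a $2\times2$ submatrix, invoke the preceding lemma to get $\sigma_2>0.618$, and conclude $f(M,N)\le mn-(\gamma-1)(0.618)^2<0$. Your version is slightly tidier in that you first strip zero rows and columns and explicitly note that row/column permutations preserve singular values (so the ``up to permutation'' in the $2\times2$ case check is harmless), a point the paper leaves implicit.
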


\begin{proof}
First, note that the optimal objective function value is at least 1
since we could take $M=\{i\}$ and $N=\{j\}$ where $(i,j)$ are chosen
so that $A(i,j)=1$.  In this case, $f(M,N)=1$.

Let $(M,N)$ be a pair of index sets that is a putative optimum for
\eref{eq:objfunc2}.
Suppose $A(i,j)=0$, where $(i,j)\in M\times N$.  
One possibility is that either row $i$ or column $j$ is entirely
made of 0's, in which case $(M,N)$ conforms to the claim in the lemma.
The other case is that $A(i,j)=0$
and yet there is an $i'\in M$ and $j'\in N$ such that $A(i',j)=A(i,j')=1$.
In this case, $A(M,N)$ has as a submatrix (using rows $\{i',i\}$ and
columns $\{j',j\}$) one of the two special matrices $U_1$ or $U_2$ from
\eref{eq:u1u2}.  Thus, $\sigma_2(A(M,N))\ge 0.618$.  On the other hand,
$\sigma_1(A(M,N))^2\le \Vert A(M,N)\Vert_F^2\le mn$.  Therefore,
$f(M,N)\le mn-(\gamma-1) (0.618^2)\le 0$ since $(\gamma-1)(0.618)> mn$
by choice of $\gamma.$  In particular, this means
$(M,N)$ cannot be optimal.
\end{proof}

If $A(M,N)$ includes a row or column entirely of zeros, then this row
or column may be dropped without affecting the value of the objective
function \eref{eq:objfunc2}.  Hence it follows
from the lemma that without loss of generality that the
optimizer $(M,N)$ of \eref{eq:objfunc2} indexes a matrix of all $1$'s.
In that case,
$\sigma_1(A(M,N))=\sqrt{|M|\cdot|N|}$ while $\sigma_2(A(M,N))=\cdots=
\sigma_p(A(M,N))=0$ (where $p=\min(|M|,|N|)$), and hence
$f(M,N)=|M|\cdot|N|$.  Thus, the value of the objective function 
corresponds exactly to the number of edges in the biclique.
This completes the proof that biclique is reducible in polynomial time
to maximizing \eref{eq:objfunc2}.

We note that Gillis \cite{gillis}
also uses the result of Peeters for a similar purpose,
namely, to show that the subproblem arising in
his NMF algorithm is
also NP-hard.

The NP-hardness result in this section requires that $\gamma$ be
an input parameter.  We conjecture that \eref{eq:objfunc2} is 
NP-hard even when $\gamma$ is fixed (say $\gamma=4$ as used herein).

\section{Image database test cases}
\label{sec:image}

\section{Text database test cases}
\label{sec:text}

\section{Conclusions}
We have proposed an algorithm called R1D for nonnegative matrix
factorization.  It is based on greedy rank-one downdating according
to an objective function, which is heuristically maximized.
We have shown that the objective function is well suited for
identifying topics in the $\epsilon$-separable text model
and on a model of decomposable bitmap images.
Finally, we have shown that the algorithm performs well in practice.

This work raises several interesting open questions.  First, the
$\epsilon$-separable text model seems rather too simple to describe
real text, so it would be interesting to see if the results generalize
to more realistic models.

One straightforward generalization is to consider power-law models
for document lengths that generalize the Zipf law: suppose that the
probability of length $l$ occurring is proportional to $l^{-p}$ for some
$p$.  Our proof of Theorem~\ref{thm:main}
generalizes to cover the case $0\le p <1$ without too much difficulty.
However, proving the theorem in the case $p>1$ appears to be much more
difficult (and perhaps the theorem is not true in this case).  When $p>1$,
short documents dominate the corpus, and short documents are not
easily analyzed using Chernoff-Hoeffding bounds.

A second question is to generalize the model of
image databases for which a theorem can be established.

A third question asks
whether a result like Theorem~\ref{thm:main}
will hold for the R1D algorithm using our proposed
definition of
the heuristic
subroutine {\tt ApproxRankOneSubmatrix}.
When {\tt ApproxRankOneSubmatrix}.
is applied to an $\eps$-separable corpus, does it successfully identify
a topic?  Here is an example
of a difficulty.  Suppose $n\rightarrow\infty$
much faster than $L$.  In this case, the document $j$ with the highest
norm will be the one in which $l_j$ is very close to $L$ and in which one
entry $A(i,j)$ is very close to $L$ while the rest are mostly zeros.
This is because 
the maximizer of $\Vert \x\Vert_2$ subject to the constraint
that $\Vert\x\Vert_1=C$ occurs when one entry of $\x$ is equal to $C$
and the rest are zero. 
It is likely
that at least one instance of a such a document will occur regardless
of the matrix $P(\cdot,\cdot)$ if $n$ is sufficiently large.  This document
will then act as the seed for expanding $M$ and $N$, but it may not
be similar to any topic.   

The scenario described in the preceding paragraph 
can apparently be prevented by
requiring $n$ and $L$ to grow proportionately, but the analysis
appears to be complicated in this case.  If we assume that the initial
column $j$
selected by R1D is well approximated by $l_jP(:,k)$ for some $k$
(i.e., the column is `acceptable' in the terminology of 
Theorem~\ref{thm:main}),
then the rest of $M$ and $N$ is likely to be the topics and terms associated
with topic $k$.  This is because
Lemma~\ref{lem:uniqueheavy} indicates that it is unlikely that a column
or row associated to another topic can improve the objective function
\eref{eq:objfunc}, whereas Lemmas~\ref{lem:dksubsN} and 
\ref{lem:hksubsM} indicate
that a document or term associated with topic $k$ will be favored
by \eref{eq:objfunc}.

\bibliography{../../Bibfiles/nips1,../../Bibfiles/nips}
\bibliographystyle{plain}
\end{document}